\newcommand{\bX}{\boldsymbol{X}}
\newcommand{\bx}{\boldsymbol{x}}
\newcommand{\bR}{\boldsymbol{R}}
\newcommand{\bS}{\boldsymbol{S}}
\newcommand{\bW}{\boldsymbol{W}}
\newcommand{\bI}{\boldsymbol{I}}
\newcommand{\bT}{\boldsymbol{T}}
\newcommand{\balpha}{\boldsymbol{\alpha}}
\newcommand{\btheta}{{\boldsymbol{\theta}}}
\newcommand{\bTheta}{\boldsymbol{\Theta}}
\newcommand{\bmu}{\boldsymbol{\mu}}
\newcommand{\bnu}{\boldsymbol{\nu}}
\newcommand{\cB}{\mathcal{B}}
\newcommand{\cS}{\mathcal{S}}
\newcommand{\cG}{\mathcal{G}}
\newcommand{\cN}{\mathcal{N}}
\newcommand{\bbE}{\mathbb{E}}
\newcommand{\bbV}{\mathbb{V}}
\newcommand{\xmark}{\ding{55}}%
\newtheorem{lemma}{Lemma}
\newtheorem{theorem}{Theorem}
\newcommand{\blind}{1}
\begin{document}

\def\spacingset#1{\renewcommand{\baselinestretch}%
{#1}\small\normalsize} \spacingset{1}


\if1\blind
{
  \title{\bf Structure Learning for Hybrid Bayesian Networks}

\author[1,2]{Wanchuang Zhu}
\author[2,3]{Ngoc Lan Chi Nguyen}
\affil[1]{Data61, CSIRO, Australia}
\affil[2]{ARC Centre for Data Analytics for Resources and Environments, Australia}
\affil[3]{School of Computer Science, The University of Sydney, Australia}

\date{}
\setcounter{Maxaffil}{0}
\renewcommand\Affilfont{\itshape\small}
  \maketitle
} \fi

\if0\blind
{
  \bigskip
  \bigskip
  \bigskip
  \begin{center}
    {\LARGE\bf Structure Learning for Hybrid Bayesian Networks}
\end{center}
  \medskip
} \fi

\bigskip

\begin{abstract}
Bayesian networks have been used as a mechanism to represent the joint distribution of multiple random variables in a flexible yet interpretable manner. One major challenge in learning the structure of a Bayesian network is how to model networks which include a mixture of continuous and discrete random variables, known as hybrid Bayesian networks. This paper overviews the literature on approaches to handle hybrid Bayesian networks.
Typically one of two approaches is taken: either the data are considered to have a joint distribution which is designed for a mixture of discrete and continuous variables, or continuous random variables are discretized, resulting in discrete Bayesian networks. In this paper, we propose a strategy to model all random variables as Gaussian, referred to it as  {\it Run it As Gaussian (RAG)}. We demonstrate that RAG results in more reliable estimates of graph structures theoretically and by simulation studies, than converting continuous random variables to discrete. 
Both strategies are also implemented on a childhood obesity data set. The two different strategies give rise to significant differences in the optimal graph structures, with the results of the simulation study suggesting that our strategy is more reliable.
\end{abstract}

\section{Introduction} \label{sec:intro}
Bayesian networks (BNs) have emerged as a useful framework to conceptualize and model the relationships among many interacting random variables. The applicability of BNs spans many fields, including but not limited to, gene regulatory networks~\citep{xing2017GRN}, demographics~\citep{sun2015bayesian}, risk management~\citep{tavana2018artificial}, and psychology~\citep{ van2017systematic}.

The structure of a BN represents graphically the joint distribution of the random variables under consideration \citep{koller2009probabilistic}. Random variables and the connections between them are represented as nodes and edges respectively. The graph structures of these networks are referred to as directed acyclic graphs (DAG). In a DAG, a directed edge between random variables indicates probabilistic dependency. The node from which an edge originates is called a parent, while the node to which an edge points is called a child. It is important to note that for a DAG, a variable is conditionally independent of all its predecessors given the state of its parents. Therefore, the joint probability distribution of the random variables can be expressed as a product of conditionally independent distributions of all variables, each of which depends only on its parents. Structure learning refers to learning the structure of the network, which is determined by the presence and direction of edges.

A well known challenge in structure learning is that the number of possible structures grows super-exponentially with respect to the number of random variables in the network \citep{1978DAGn,kuipers2017partition}. For example, a network with 15 variables, can form up to $10^{43}$ possible DAGs. The posterior distribution of the DAGs is discrete, making approximations relying on gradients, such as variational Bayes, inaccurate \citep{zhang2018advances}. Therefore, computationally intensive sampling methods, such as Markov chain Monte Carlo (MCMC), are often the only means of estimating the posterior surface. Overviews of the challenges and approaches for structure learning of Bayesian networks have been thoroughly discussed by \cite{heinze2018causal,glymour2019review,vowels2021d}.

A neglected difficulty in the literature is structure learning for \textit{Hybrid Bayesian networks} (HBNs) which contains both continuous and discrete variables. The existing approaches avoid such challenge by largely assuming that the variables are either all continuous or all discrete. For example, the score-based methods for structure learning define a score function that measures the structure's fitness to the observed data and then employ a search algorithm over the set of potential network structures. Typical score functions include K2 score, 
likelihood score, normalized minimum likelihood, AIC or BIC score and Bayesian score \citep{koller2009probabilistic}. All these score functions can not be directly applied to HBNs. Although new score functions were actively proposed, including \cite{zheng2018dags,zhang2019d,yu2019dag} and \cite{lorch2021dibs}, the challenge in HBNs has not properly tackled.

The limited researches for structure learning in HBNs can be summarised into two categories: it either develops a score function that can directly model the dependency between discrete and continuous variables, or it discretizes continuous random variables, thus making all variables in the network discrete. Examples in the first category include \cite{lauritzen1989graphical,geiger1994learning,bach2002learning,moral2001mixtures} and \cite{sokolova2014causal}.
However, the modeling between continuous and discrete variable is computationally heavy and tedious. To reduce the computational burden, some approaches have to impose restrictive assumptions on their models. In particular, they do not allow continuous parent nodes to have discrete nodes as their child nodes. For example, the classic conditional linear Gaussian model (CLG) suffers from the above constraint \citep{lauritzen1989graphical, lauritzen1992propagation}.

The strategy which transforms continuous random variables into discrete ones via discretization suffers from information loss. Commonly used methods involve dividing continuous variables into equal intervals, or into equal quantiles, or the interval lengths are varied according to expertise from a specific field \citep{nojavan2017comparative}. Researchers have made efforts to reduce the loss of information during discretization. \cite{rissanen1978modeling} and \cite{friedman1996discretizing} attempt to minimise information loss is via the use of minimum description length (MDL) principle. \cite{friedman1996discretizing} aimed to calculate the ideal number of intervals for discretization and edge locations when dealing with continuous variables. The selected discretization policy is the one that provides MDL for the discretized Bayesian network, while requiring the least amount of information to recover the original continuous values. 
\cite{monti2013multivariate} used a Bayesian score metric to measure goodness of discretization policies. However, the Bayesian score metric relies on an unwarranted assumption that each continuous variable is a noisy observation of an underlying discrete variable. In addition, the discretization policy needs to be updated dynamically corresponding to the changes in DAG structures, which is  computationally heavy and makes interpreting the dicretization policy difficult.
\cite{chen2017learning} extend single-variable discretization techniques \citep{boulle2006modl, lustgarten2011application} by modifying a prior that can reduce the cubic complexity to quadratic, although the applicability of the prior choices for discretization policies requires further investigation.

Given that the existing approaches suffer from computational issues, in this paper, we propose a strategy to learn structures in hybrid Bayesian networks. We show that considering all discrete random variables as continuous outperforms techniques which seek to accommodate the discreteness, both from a theoretical standpoint as well as empirical evidence. Such a strategy not only avoids the need to model the complex relationship between continuous and discrete variables, but also ensures no information loss. The rest of the paper is structured as follows. In Section \ref{sec:preliminaries}, notations and MCMC schemes for structure learning are introduced. Theoretical properties are provided in Section \ref{sec:theory} for the proposed strategy under a score-based approach. Simulation studies are conducted in Section \ref{sec:simulations}. A real example is demonstrated in Section \ref{sec:realdata}, followed by discussion and conclusion in Section \ref{sec:conclusion}.

\section{Preliminaries on Bayesian Networks} \label{sec:preliminaries}
\subsection{Notations}

The two components of a Bayesian network, for a set of $n$ random variables $\bX = (X_1, \dots, X_n)$, are (1) a network structure with an underlying directed acyclic graph $\mathcal{G}$, in which each node represents a random variable $X_i$ and the edges represent directed dependencies between random variables; and (2) the conditional probability distributions $P(X_i\mid\textbf{Pa}_i^{\cG}, \btheta_i^{\cG})$ for each variable $X_i$ given a set of parameters $\btheta_i^{\cG}$ and the set of parents $\textbf{Pa}_i^{\cG}$, where $\textbf{Pa}_i^{\cG}$ denotes the set of parents for node $X_i$ in $\cG$. Given the graph structure $\mathcal{G}$ and corresponding parameters $\bTheta^{\cG}=(\btheta_1^{\cG},\cdots,\btheta_n^{\cG})$, the joint distribution of $\bX$ can be written as a product of conditionally independent distributions 
\setlength{\belowdisplayskip}{1pt} \setlength{\belowdisplayshortskip}{1pt}
\setlength{\abovedisplayskip}{1pt} \setlength{\abovedisplayshortskip}{1pt}
\begin{align*}
    P(X_1, \dots, X_n \mid \cG, \bTheta^{\cG}) = \prod_{i=1}^{n} P(X_i\mid\textbf{Pa}_i^{\cG}, \btheta_i^{\cG}),
\end{align*}
where the random variable in $\bX$ can be either discrete or continuous. Structure learning aims to discover the underlying true graph structure $\mathcal{G}$. Note that, different $\mathcal{G}$'s can explain the data $\bm X$ equally well and the structure of the network is only identifiable up to an equivalence class. An equivalence class consists of graphs that share the same skeleton and v-structure \citep{koller2009probabilistic}.
\subsection{MCMC for Structure Learning}
For a score-based approach in a Bayesian framework, the posterior probability of a DAG $\mathcal{G}$ given the data $\bX$ is used as the score function for the graphical structure. In particular,
\begin{align*}
    P(\mathcal{G}\mid\bX) \propto P(\bX\mid\mathcal{G})P(\mathcal{G})=\int P(\bX\mid \bTheta^{\cG},\mathcal{G})P(\bTheta^{\cG} \mid \mathcal{G}) d\bTheta^{\cG},
\end{align*}
where $P(\mathcal{G})$ denotes a prior distribution over the structures and $P(\bX\mid\mathcal{G})$ refers to the marginal likelihood, derived by marginalizing over the parameter $\bTheta^{\cG}$ and $P(\bTheta^{\cG} \mid \mathcal{G})$ is a prior distribution for parameters $\bTheta^{\cG}$ attached to the graph $\mathcal{G}$. It has been shown in \cite{geiger2002parameter} that the integration can be only validly done in two scenarios: the random variables either follow a multivariate normal distributions, or follow a multivariate multinomial distribution.

MCMC-based structure learning methods originated from \cite{madigan1995bayesian} and is known as {\it structure MCMC}. The acceptance rate in Metropolis-Hastings step is given as,
\begin{eqnarray*}
\alpha = \min \left\{1, \frac{P(\bX\mid\mathcal{G}^p)P(\mathcal{G}^p) q(\mathcal{G}^c \mid \mathcal{G}^p)}{P(\bX\mid\mathcal{G}^c)P(\mathcal{G}^c) q(\mathcal{G}^p \mid \mathcal{G}^c)} \right\},
\end{eqnarray*}
where $\mathcal{G}^c$ denotes the current graph, $\mathcal{G}^p$ denotes the proposed graph and $q(\mathcal{G}^c \mid \mathcal{G}^p)$ denotes the proposal distribution between two graphs.
Subsequent works sought to improve upon mixing and convergence, notably the introduction of {\it order MCMC} \citep{friedman2003structuremcmc} by building a Markov chain on node orderings, each of which covers a large set of DAGs, at the expense of bias in the resulting posterior distribution. To avoid this bias while maintaining reasonable convergence, \cite{grzegorczyk2008improving} augmented the classical structure MCMC with a new edge reversal move. \cite{kuipers2017partition} further proposed {\it partition MCMC}, which reduces the sampling space by collapsing the space of DAGs to a space of ordered partitions regarding the nodes. Partition MCMC is adopted as the default framework in this paper to draw posterior samples of DAG and evaluate performance of different strategies dealing with hybrid Bayesian networks.


We start with the description on how to derive an ordered partition from a DAG. The acyclic characteristic of DAGs means that they admit at least one outpoint: a node without any parent. 
By recursively taking out all the outpoints from a DAG, we obtain an ordered partition $\Lambda$. It contains two components: a permutation of nodes and a partition vector. Figure \ref{fig:example.partition} shows an illustrative example to derive the ordered partition from a DAG.

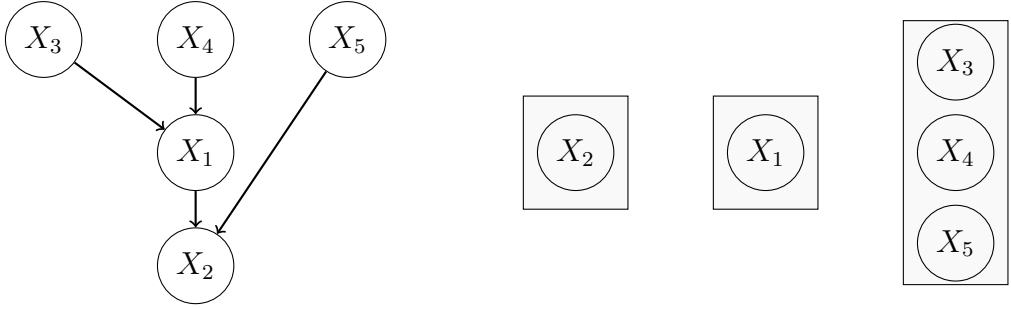
\begin{figure}[tb]
    \centering
    \begin{tikzpicture}
    \node[shape=circle,draw=black] (x3) at (-2,1.5) {$X_3$};
    \node[shape=circle,draw=black] (x4) at (0,1.5) {$X_4$};
    \node[shape=circle,draw=black] (x5) at (2,1.5) {$X_5$};
    \node[shape=circle,draw=black] (x1) at (0,0) {$X_1$};
    \node[shape=circle,draw=black] (x2) at (0,-1.5) {$X_2$};
    \draw[thick,->] (x3) to  (x1);
    \draw[thick,->] (x1)--(x2);
    \draw[thick,->] (x5)--(x2);
    \draw[thick,->] (x4)--(x1);
    
    \node[shape=rectangle,draw=black,fill = gray!5,minimum height = 1.5cm] (A.background) at (7.5,0) {\ \ \ \  \ \ \ \ \ };
    \node[shape=circle,draw=black] (A) at (7.5,0) {$X_1$};
    
    \node[shape=rectangle,draw=black,fill = gray!5,minimum height = 1.5cm] (B.background) at (5,0) {\ \ \ \  \ \ \ \ \ };
    \node[shape=circle,draw=black] (B) at (5,0) {$X_2$};
    \node[shape=rectangle,draw=black,fill = gray!5,minimum height = 3.5cm] (D.background) at (10,0) {\ \ \ \  \ \ \ \ \ };
    \node[shape=circle,draw=black] (C) at (10,1.2) {$X_3$};
    \node[shape=circle,draw=black] (D) at (10,0) {$X_4$};
    \node[shape=circle,draw=black] (D) at (10,-1.2) {$X_5$};
    \end{tikzpicture}
    \caption{A DAG graph and its corresponding ordered partition. The permutation $\prec = (X_2, X_1, X_3, X_4, X_5)$ and the partition vector $\Pi_\prec = [1,1,3]$.}
    \label{fig:example.partition}
\end{figure}


The mapping between the DAG space and partition space is not one-to-one mapping, since obviously one particular partition is compatible with multiple DAGs. 
Thus, the posterior distribution of ordered partition is given as,
\begin{eqnarray} \label{eqn:score.partition}
P(\Lambda\mid \bX) = \sum_{\mathcal{G}}P(\Lambda\mid\mathcal{G},\bX)P(\mathcal{G}\mid\bX) = \sum_{\mathcal{G}\in\Lambda}P(\mathcal{G}\mid \bX). 
\end{eqnarray}
Metropolis Hastings algorithm can be used to draw samples from posterior distribution shown in Equation \ref{eqn:score.partition}. The advantages of partition MCMC include that: (i) the space of partitions is much smaller than the DAG space, and (ii) a partition is ensured to be compatible with some DAGs. More details can be found in \cite{kuipers2017partition}.


\section{A New Strategy and Its Theoretical Properties} \label{sec:theory}

In Section \ref{subsec:strategy}, we present an effective strategy to handle hybrid Bayesian networks. Then it is shown that its performance can be evaluated in a two-node Bayesian network. The evaluation metrics for continuous BNs and discrete BNs are presented in Section \ref{sec:posteriorratio.bge} and \ref{sec:posteriorratio.bde} respectively. And finally in Section \ref{sec:compare.bge.bde}, theoretical results for the proposed strategy have been derived in all the four possible scenarios with two-node Bayesian networks.

\subsection{A Strategy to Handle HBNs} \label{subsec:strategy}
We propose a {\it Run it as Gaussian} (RAG) strategy which considers all random variables to be continuous (and Gaussian). Although RAG mis-specifies the data distribution, both our theoretical and empirical results show that RAG outperforms the discretization strategy.

When theoretically comparing the performance of several methods for structure learning, the first question needs to be answered is: how many nodes and edges should be included in the Bayesian networks? Our theoretical results aim to be general enough to provide a guidance to a large range of sizes of Bayesian networks. 
Fortunately, Lemma \ref{lemma:2nodes} shows that any edge in any Bayesian network can be represented as a dependency between two random variables. One immediate implication of this Lemma is that when evaluating the impact of any strategy (including RAG), we only need to investigate its impact or performance for two-node BN cases.

\begin{lemma} \label{lemma:2nodes}
Given $n$ random variables $\bX$, for any graph structure $\cG$, the posterior distribution $P(\cG \mid \bX)$ can be derived by conducting a series of  simple linear regressions.
\end{lemma}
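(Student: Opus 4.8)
The plan is to start from the posterior factorization already recorded in the preliminaries, $P(\cG \mid \bX) \propto P(\cG)\int P(\bX \mid \bTheta^{\cG}, \cG)\, P(\bTheta^{\cG}\mid\cG)\, d\bTheta^{\cG}$, and to show that under the RAG (Gaussian) modelling assumption this whole expression is assembled from the outputs of ordinary linear regressions, one per parent--child relationship. First I would exploit the DAG factorization $P(\bX \mid \bTheta^{\cG},\cG) = \prod_{i=1}^{n} P(X_i \mid \textbf{Pa}_i^{\cG}, \btheta_i^{\cG})$ together with global parameter independence in the prior, $P(\bTheta^{\cG}\mid\cG)=\prod_{i} P(\btheta_i^{\cG}\mid\cG)$, so that the marginal likelihood itself factorizes into local terms, $P(\bX \mid \cG) = \prod_{i=1}^{n} P\!\left(X_i \mid \textbf{Pa}_i^{\cG}\right)$. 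This is the standard score-decomposability property and reduces the global posterior to a product of node-wise scores times the structure prior.

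Next I would identify each local factor with a Bayesian linear regression. Under the Gaussian assumption each conditional $P(X_i \mid \textbf{Pa}_i^{\cG}, \btheta_i^{\cG})$ is exactly a linear-Gaussian model, i.e.\ $X_i$ regressed linearly on its parents with Gaussian noise; with a conjugate (normal--inverse-gamma or normal--inverse-Wishart) prior the integral over $\btheta_i^{\cG}$ has a closed form that is precisely the marginal likelihood of that regression. Thus computing the score of a given $\cG$ amounts to running, for each node, the regression of $X_i$ on $\textbf{Pa}_i^{\cG}$ and reading off its marginal likelihood.

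To reach \emph{simple} (pairwise) regressions, and hence the two-node reduction advertised afterwards, I would compare graphs differing by a single edge $j\to i$. The posterior ratio cancels every local factor except the one at node $i$, leaving a comparison of the regression of $X_i$ on $\textbf{Pa}_i^{\cG}$ with and without $X_j$. By the Frisch--Waugh--Lovell (partial regression) argument for Gaussians, the incremental contribution of the edge is governed by the simple linear regression between the residuals of $X_i$ and of $X_j$ after both have been regressed on the shared parents. Equivalently, one may invoke the chain-rule (Cholesky) factorization of the joint Gaussian along the topological order, which writes the joint density as a product of univariate conditionals, each a regression of one variable on its predecessors that can be built up through a sequence of simple regressions. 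Either route expresses every edge's effect on the posterior as a simple linear regression, so the entire posterior $P(\cG\mid\bX)$ is recovered from a series of them.

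The main obstacle I anticipate is the marginal (integrated) nature of the claim: point estimates from a multiple regression decompose into simple regressions on residualized variables by elementary linear algebra, but the marginal likelihood requires integrating out the regression coefficients and the error variance, so I must verify that the conjugate prior and the orthogonalization are compatible---that residualizing the predictors does not distort the normalizing constants, and that parameter independence across nodes is preserved under the reparametrization. Handling this bookkeeping carefully, so that the product of simple-regression marginal likelihoods reproduces the exact node-wise score rather than merely its maximizer, is where the real work lies; the combinatorial DAG structure only enters through which variables play the role of shared parents in each residualization.
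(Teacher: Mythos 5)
Your proposal matches the paper's own proof in essentials: both factorize the score over nodes via the DAG decomposition $P(\cG\mid\bX)\propto\prod_i P(X_i\mid\textbf{Pa}_i^{\cG})$ (the paper additionally splits off the parentless ``outpoint'' nodes), and both then reduce each multi-parent conditional to a chain of simple regressions by successively residualizing $X_i$ on one parent at a time --- your Cholesky/Frisch--Waugh--Lovell route is exactly the paper's recursively defined residuals $Y_{i,t}$, and the edge count you obtain (one simple regression per edge) is the paper's concluding remark. The marginal-likelihood bookkeeping you flag as ``where the real work lies'' is in fact left implicit in the paper's proof too, which carries out the residualization directly at the level of the integrated local scores without further justification, so on that point your write-up is if anything the more candid of the two.
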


\begin{proof}
A proof of Lemma \ref{lemma:2nodes} can be found in Appendix A. 
\end{proof}


\begin{figure}[tb]
    \centering
    \begin{tikzpicture}
    \node[shape=circle,draw=black] (A) at (0,0) {$X_1$};
    \node[shape=circle,draw=black] (B) at (3,0) {$X_2$};
    \node[shape=circle,draw=black] (C) at (6,0) {$X_3$};
    \node[shape=circle,draw=black] (D) at (9,0) {$X_4$};
    
    \draw[->,line width=0.4mm] (A) -- (B);
    \draw[->,line width=0.4mm] (C) -- (B);
    \draw[->,line width=0.4mm] (D) -- (C);
        
    \end{tikzpicture}
    \caption{A DAG structure $\cG$.}
    \label{fig:example.BN}
\end{figure}

Lemma \ref{lemma:2nodes} shows that, essentially the posterior distribution involves a series of simple linear regressions, each of which can be deemed as a two-node Bayesian networks.
Once the impact of the strategies on two-node Bayesian networks is well understood, it is clear about the impact on any single edge in any Bayesian network. Thus, it provides a valuable indicator to the impact of the strategy on the whole Bayesian network. For example, suppose a strategy inclines to underestimate the existence of a link in a two-node Bayesian network, then such a strategy will lead to high false negative. However, the severity of false negative depends on the true graph.

To this end, we only need to examine the impact of strategies to two-node Bayesian networks. We consider $\bX=(X_1,X_2)$ and note that there are only two equivalence classes, either $X_1$ and $X_2$ are dependent so that an edge exists, or they are independent so that no edge exists, denoted by $\mathcal{G}_0$. 
In the following section, let $\cG_1$ denote graph $X_1 \rightarrow X_2$, which is set to be the ground truth graph. Naturally, $\cG_0$ is a special case of $\cG_1$ with the dependence equal to 0.

The metric that we use to evaluate the performance of each strategy is 
the expectation of log posterior ratio, defined as below
\begin{eqnarray} \label{eqn:evaluation.criterion}
&&r_{10} = \frac{1}{N} \int_{\bx \in \Omega_{\bX}}  \log\left(\frac{P(\cG_1 \mid \bX) }{P(\cG_0 \mid \bX )}\right) P(\bX = \bx) d\bx,
\end{eqnarray}
where $\Omega_{\bX}$ denotes the sample space for the random variables $(X_1, X_2)$ with $N$ samples. Given the data is generated from $\cG_1$, a larger posterior likelihood ratio corresponds to better performance of a strategy, except when the data is generated from $\cG_0$, which is a special case of $\cG_1$ with no dependence between the two nodes.

Section \ref{sec:posteriorratio.bge} and \ref{sec:posteriorratio.bde} derive the expectation of posterior ratio by using Bayesian Gaussian likelihood equivalence (BGe) \citep{heckerman1995} and Bayesian Dirichlet likelihood equivalence (BDe) \citep{heckerman1995learning} respectively, regardless the data generation process.

\subsection{Posterior Ratio using BGe Score} \label{sec:posteriorratio.bge}
Under the BGe score, $X_1$ and $X_2$ are modeled as bi-variate Gaussian random variables with $\bmu$ as mean vector and $\bW$ as precision matrix. The prior distribution on $\bmu$ and $\bW$ is a Normal-Wishart distribution \citep{geiger2002parameter,kuipers2014addendum}, which is defined as follows. The prior on $\bW$ is a Wishart distribution, $\bW \sim \mathcal{W}_n (\bT^{-1},\alpha_w)$, where $\alpha_w > n+1$ is the degrees of freedom and $\bT$ is the positive definite parametric matrix. More specifically,
\begin{eqnarray}
P(\bW \mid \bT, \alpha_w) = \frac{\mid \bW \mid^{\frac{\alpha_w -n -1}{2}} }{Z_{\mathcal{W}(n,\bT,\alpha_w)}} \exp\left\{-\frac{1}{2} Tr(\bT\bW)\right\}, \label{eqn:prior.W}
\end{eqnarray}
where the normalizing constant $Z_{\mathcal{W}(n,\bT,\alpha_w)}= \frac{2^{\frac{\alpha_w n}{2}} \Gamma_n(\frac{\alpha_w}{2})}{\mid \bT\mid^{\frac{\alpha_w}{2}}}$ and the multivariate gamma function is defined as $\Gamma_n(\frac{\alpha_w}{2}) = \pi^{n(n-1)/4} \prod_{j=1}^n \Gamma(\frac{\alpha_w+1-j}{2})$.
Given the precision matrix $\bW$, the prior distribution on $\bmu$ is a normal distribution, which is given as,
\begin{eqnarray}
P(\bmu\mid \bW,\alpha_\mu) = \frac{(\alpha_\mu)^{\frac{n}{2}} \mid \bW \mid ^{\frac{1}{2}}}{(2\pi)^{\frac{n}{2}}} \exp\left\{-\frac{1}{2} Tr\left((\bmu - \bnu)(\bmu - \bnu)^T \alpha_{\mu}\bW\right) \right\}. \label{eqn:prior.mu}
\end{eqnarray}
 
The hyperparameters in these prior distributions are $\alpha_\mu$, 
$\bnu=(0,0)^T$ and $\bT=t \bI_n$, where 
$\bI_n$ is the $n\times n$ identity matrix.
Lemma \ref{lemma:BGe.posterior.ratio} shows the computation of $r_{10}$ under the BGe score, where the variables are modeled as a multivariate Gaussian distribution regardless the true data generation processes.

\begin{lemma} \label{lemma:BGe.posterior.ratio}
Under the BGe setting, using the prior distributions shown in Equation \eqref{eqn:prior.W} and \eqref{eqn:prior.mu}, expectation of the posterior ratio of different graph structures given $\bX$ is computed as,
\begin{eqnarray}
\lim_{N\rightarrow \infty} r_{10} &=& \frac{1}{2} \log\left( \frac{\Sigma_{11} \Sigma_{22}}{\Sigma_{11} \Sigma_{22} - \Sigma_{12}^2} \right), \label{eqn:BGe.infinite.ratio}
\end{eqnarray}
where 
$\Sigma$ is the covariance matrix of $(X_1,X_2)$, and $\Sigma_{ij}$ denotes the element of the covariance matrix at row $i$ and column $j$.
\end{lemma}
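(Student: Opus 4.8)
The plan is to reduce $r_{10}$ to a ratio of closed-form BGe marginal likelihoods and then carry out a large-$N$ asymptotic analysis. First I would invoke the likelihood-equivalence property of the BGe score: since $X_1$ is a root node in both $\cG_1$ and $\cG_0$, its local marginal likelihood is identical under the two graphs and cancels, as does the $O(1)$ structure-prior ratio $P(\cG_1)/P(\cG_0)$, which is annihilated by the factor $1/N$ in \eqref{eqn:evaluation.criterion} as $N\to\infty$. This collapses the integrand to $\log\bigl(P(X_1,X_2)/(P(X_1)P(X_2))\bigr)$, i.e. the log of the joint two-variable BGe marginal likelihood divided by the product of the two single-variable BGe marginal likelihoods.

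Next I would substitute the Normal--Wishart closed form for each marginal likelihood. Writing a $d$-dimensional BGe marginal likelihood as a product of a factor $(2\pi)^{-dN/2}$, a factor $(\alpha_\mu/(\alpha_\mu+N))^{d/2}$, and a ratio of Wishart normalizing constants $Z_{\mathcal{W}(d,\bR_d,\alpha_w+N)}/Z_{\mathcal{W}(d,\bT_d,\alpha_w)}$ with posterior matrix $\bR_d=\bT_d+\bS_d$, I would check that the $(2\pi)$ factors and the $(\alpha_\mu/(\alpha_\mu+N))$ factors cancel exactly between the numerator ($d=2$) and the denominator ($d=1$ twice), by the dimension count $2=1+1$. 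What then remains is the Gamma-function contribution from the $\Gamma_d$ terms together with the determinant contribution $|\bR_d|^{\pm(\alpha_w+N)/2}$ and the prior factors $|\bT_d|^{\pm\alpha_w/2}$, the latter being $O(1)$ since $\bT=t\bI_n$ is fixed.

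The heart of the argument is the asymptotics of the surviving terms. Using $\tfrac1N\bS_d \to \Sigma_d$ (the relevant sub-block of the data covariance) by the law of large numbers, I would use $|\bR_d| = |\bT_d + \bS_d| \sim N^d|\Sigma_d|$, so that $\log|\bR_2| \sim 2\log N + \log(\Sigma_{11}\Sigma_{22}-\Sigma_{12}^2)$ while $\log|\bR_1^{(i)}| \sim \log N + \log\Sigma_{ii}$. Combining these with the $\pm(\alpha_w+N)/2$ exponents, the $\log N$ pieces cancel exactly (again by $2=1+1$), leaving the single $O(N)$ term $\tfrac{N}{2}\log\bigl(\Sigma_{11}\Sigma_{22}/(\Sigma_{11}\Sigma_{22}-\Sigma_{12}^2)\bigr)$. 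A parallel Stirling expansion of the multivariate Gamma functions shows their net contribution is $o(N)$ (the leading $N\log N$ and $N$ terms cancel by the same dimension count), as do the powers of $2$. Dividing by $N$ and letting $N\to\infty$ then yields the claimed expression.

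The main obstacle I anticipate is bookkeeping rather than conceptual: correctly tracking the dimension-dependent degrees of freedom inside the multivariate Gamma functions and confirming that every $N\log N$, $N$, and $\log N$ contribution outside the target determinant term cancels. A secondary technical point is justifying the interchange of the $N\to\infty$ limit with the expectation over the data-generating law in \eqref{eqn:evaluation.criterion}; this rests on the concentration of the sample covariance at $\Sigma$ together with uniform integrability of the per-sample log-ratio, which I would verify separately.
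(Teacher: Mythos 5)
Your proposal is correct and follows essentially the same route as the paper's own proof: reduce the posterior ratio to the joint BGe marginal likelihood over the product of the two single-variable marginals, substitute the closed-form Normal--Wishart expressions, and extract the single $O(N)$ determinant term via $\bR \approx N\Sigma$ (law of large numbers), with all Gamma-function, prior, and $\pi$-power contributions cancelling or being $o(N)$. The only differences are cosmetic---the paper's posterior matrix also carries the $O(1)$ mean-correction term $\frac{N\alpha_\mu}{N+\alpha_\mu}(\bar{\bx}-\bnu)(\bar{\bx}-\bnu)^T$ and adjusted degrees of freedom $\alpha_w-n+l$ in the one-dimensional marginals, both asymptotically negligible exactly as your bookkeeping anticipates, and your explicit Stirling treatment of the Gamma terms and flagging of the limit--expectation interchange are, if anything, more careful than the paper's.
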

The proof can be found in Appendix B. 
Obviously, for a large value of $N$, $r_{10}$ is mainly determined by covariance matrix $\Sigma$.

\subsection{Posterior Ratio Using BDe Score} \label{sec:posteriorratio.bde}
Under the BDe score, variables $X_1$ and $X_2$ are discrete variables, assuming binary variables for simplicity purpose. As there are two levels for both variables $X_1$ and $X_2$. There are a total of three parameters which can fully specify the joint distribution of $X_1$ and $X_2$: $\theta_{12}$ denoting the probability of $X_1=1, X_2=1$, $\theta_{1\bar{2}}$ denoting the probability of $X_1=1, X_2=0$ and $\theta_{\bar{1}2}$ denoting the probability of $X_1=0, X_2=1$. The prior distributions for the three parameter under Bayesian Dirichlet likelihood equivalence is a Dirichlet distribution, that is
\begin{eqnarray} \label{eqn:prior.BDe}
P(\theta_{12},\theta_{1\bar{2}},\theta_{\bar{1}2} \mid \alpha_1,\cdots,\alpha_4) = \frac{1}{\boldsymbol{B}(\balpha)} \theta_{12}^{\alpha_1 - 1} \theta_{1\bar{2}}^{\alpha_2 -1 } \theta_{\bar{1}2}^{\alpha_3 -1} (1-\theta_{12}-\theta_{1\bar{2}}-\theta_{\bar{1}\bar{2}})^{\alpha_4 -1},
\end{eqnarray}
where $\boldsymbol{B}(\balpha)$ is a multivariate beta function, which is expressed in terms of the gamma function $\boldsymbol{B}(\balpha) = \frac{\prod_{i=1}^K \Gamma(\alpha_i)}{\Gamma(\sum_{i=1}^K \alpha_i)}$. 
Lemma \ref{lemma:BDe.posterior.ratio} shows the computation of $r_{10}$ under the BDe setting, where the variables are modeled as a multivariate multinomial distribution.

\begin{lemma}
\label{lemma:BDe.posterior.ratio}
Suppose both $X_1$ and $X_2$ are binary variables, and the true distribution of $X_1,X_2$ is denoted by $P(X_1,X_2)$. Denoting the hyperparameters in the prior distribution as $\balpha = (\alpha_1,\alpha_2,\alpha_3,\alpha_4)$, the expectation of the posterior ratio given $\bX$ is computed as,
\begin{eqnarray}
\lim_{N \rightarrow \infty}r_{10} =  p_{11} \log\left(\frac{p_{11}}{p_{1.} p_{.1}}\right) + p_{10} \log\left(\frac{p_{10}}{  p_{1.} p_{.0} }\right) + p_{01} \log\left(\frac{ p_{01}}{ p_{0.} p_{.1}}\right) + p_{00} \log\left(\frac{ p_{00}}{p_{0.} p_{.0}}\right), 
\end{eqnarray}
where $N_{11}$ denotes the number of samples in $\bX$ equal to $(1,1)$, $N_{10}$ denotes the number of samples in $\bX$ equal to $(1,0)$, $N_{01}$ denotes the number of samples in $\bX$ equal to $(0,1)$, $N_{00}$ denotes the number of samples in $\bX$ equal to $(0,0)$, $N_{1.} = N_{11}+N_{10}$, $N_{0.} = N_{01}+N_{00}$, $p_{1.} = \frac{N_{11}+ N_{10}}{N} = p_{11}+p_{10}, p_{0.} = \frac{N_{01}+ N_{00}}{N} = p_{01}+p_{00}, p_{.1} = \frac{N_{11}+ N_{01}}{N} = p_{11}+p_{01}$, and $p_{.0} = \frac{N_{10}+ N_{00}}{N} = p_{10}+p_{00}$.
\end{lemma}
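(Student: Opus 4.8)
The plan is to exploit the closed-form BDe marginal likelihood and reduce the posterior ratio to a Bayes factor which, after a Stirling expansion, concentrates on the mutual information of $(X_1,X_2)$. Since the graph prior ratio $P(\cG_1)/P(\cG_0)$ is a fixed constant, its contribution to $r_{10}$ is $O(1/N)$ and vanishes in the limit; hence it suffices to analyze the marginal-likelihood ratio $P(\bX\mid\cG_1)/P(\bX\mid\cG_0)$.

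First I would write each marginal likelihood in the standard BDe product-of-Gamma form. Under the likelihood-equivalence convention, the hyperparameters attached to each local family are obtained by marginalizing the joint Dirichlet parameters $\balpha$: node $X_1$ (parentless in both graphs) carries $(\alpha_1+\alpha_2,\ \alpha_3+\alpha_4)$; node $X_2$ carries $(\alpha_1+\alpha_3,\ \alpha_2+\alpha_4)$ under $\cG_0$, and splits into the two conditional families $(\alpha_1,\alpha_2)$ and $(\alpha_3,\alpha_4)$ under $\cG_1$. Because $X_1$ is parentless with identical hyperparameters in both structures, its factors cancel in the ratio, so the log Bayes factor reduces to a difference of $\log\Gamma$ terms involving only the counts $N_{11},N_{10},N_{01},N_{00}$ and their margins.

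Next I would apply Stirling's expansion $\log\Gamma(\alpha+m)=m\log m-m+(\alpha-\tfrac12)\log m+O(1)$, valid as $m\to\infty$ with $\alpha$ fixed, to every Gamma factor. Substituting $N_{ij}=N\hat p_{ij}$, the leading $m\log m$ pieces produce $O(N\log N)$ terms, and I would verify that the coefficients of these divergent contributions cancel exactly: collecting the four joint terms, the two sets of margins, and the explicit $\Gamma(\cdot+N)$ factor, the coefficient of $\log N$ works out to $1-1-1+1=0$. After this cancellation the surviving $O(N)$ term is precisely $N\sum_{ij}\hat p_{ij}\log\hat p_{ij}-N\sum_i\hat p_{i\cdot}\log\hat p_{i\cdot}-N\sum_j\hat p_{\cdot j}\log\hat p_{\cdot j}$, i.e. $N$ times the empirical mutual information, while the remaining $(\alpha-\tfrac12)\log m$ and constant pieces are $O(\log N)$ and $O(1)$ respectively.

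Finally I would divide by $N$, so that the subleading $O(\log N)$ and $O(1)$ remainders vanish, and invoke the law of large numbers to replace the empirical frequencies $\hat p_{ij}=N_{ij}/N$ by the true probabilities $p_{ij}$ inside the expectation. Writing out the mutual information $\sum_{ij}p_{ij}\log\bigl(p_{ij}/(p_{i\cdot}p_{\cdot j})\bigr)$ for binary variables then yields the four-term expression in the statement. I expect the delicate step to be the bookkeeping of the Stirling expansion: one must confirm that the $O(N\log N)$ terms cancel, so that the normalized ratio has a finite limit at all, and that the dominant $O(N)$ remainder assembles exactly into the mutual-information form. Passing the limit through the expectation is then routine, given the boundedness of the normalized counts.
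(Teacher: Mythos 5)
Your proposal is correct and follows essentially the same route as the paper's proof: write the BDe marginal likelihoods for $\cG_1$ and $\cG_0$ as products of Gamma functions, cancel the common factors for the parentless node $X_1$, apply Stirling's approximation so that the divergent $O(N\log N)$ terms cancel and the surviving $O(N)$ term assembles into the empirical mutual information, then divide by $N$ and pass to the limit. The only cosmetic difference is the hyperparameter aggregation for $X_2$ under $\cG_0$ (you marginalize the joint Dirichlet to get $(\alpha_1+\alpha_3,\alpha_2+\alpha_4)$, whereas the paper invokes parameter modularity and likelihood equivalence to assign $(\alpha_1+\alpha_2,\alpha_3+\alpha_4)$), which is immaterial since fixed hyperparameters contribute nothing after the $1/N$ normalization.
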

A proof can be found in Appendix C. 
For a large sample size $N$, $r_{10}$ is determined by the proportions of realization of a bi-variate random variable. A direct implication of of Lemma \ref{lemma:BDe.posterior.ratio} is that: if $X_1$ and $X_2$ is independent, then $\lim_{N \rightarrow \infty}r_{10} =0$, due to the fact that, in the independent case, $p_{11} = p_{1.} p_{.1}, p_{10} = p_{1.} p_{.0}, p_{01} = p_{0.} p_{.1}$ and $p_{00} = p_{0.} p_{.0}$.

\subsection{Comparison between RAG and Discretization} \label{sec:compare.bge.bde}

In this section, we theoretically compare the relative performance of RAG with discretization strategy only, due to the fact that in the literature discretization strategy is more prevalent than other strategies. Naturally, the strategy RAG indicates that BGe score is adopted as the score function. Similarly, the  discretization strategy indicates that BDe score is adopted as the score function. For the rest of the paper, we refer the discretization strategy to as {\bf DISC} for simplicity purpose. 

The discretized data from the original data is denoted by $\bX'$. Analogy to $r_{10}$, we define $\tilde{r}_{10}$ to measure the performance of BDe for data $\bX'$.
\begin{eqnarray} \label{eqn:evaluation.criterion.BDe}
&&\tilde{r}_{10} = \frac{1}{N} \sum_{\bx \in \Omega_{\bX'}}  \log\left(\frac{P(\cG_1 \mid \bX') }{P(\cG_0 \mid \bX' )}\right) P(\bX' = \bx),
\end{eqnarray}
where $\Omega_{\bX'}$ denotes the sample space for a discrete bi-variate random variable with $N$ samples.

\subsubsection{Data Generation Processes} \label{subsec:dgp}

As shown by Lemma \ref{lemma:2nodes}, it is adequate to study the performance of any strategy in a two-node Bayesian network. Considering each one of the two nodes could be either continuous or discrete random variable, there are four combinations of the Bayesian network. It is desirable to define the four possible scenarios and derive theoretical results for all scenarios.
The four different data generation scenarios are defined as follows,
\begin{enumerate}
\item \textit{Continuous to Continuous} ($\mathcal{S}_{cc}$).
We refer to the case where both $X_1$ and $X_2$ are continuous variables, as $\mathcal{S}_{cc}$. Let $X_1 \sim \cN(\mu_1, \sigma^2_1)$ and $X_2 \mid X_1=x_1 \sim \cN(\mu_2 + \beta x_1, \sigma^2_{2})$.
The marginal distribution of $X_2$ is $X_2 \sim \cN(\tilde{\mu}_2 , \tilde{\sigma}_2^2)$, where $\tilde{\mu}_2 = \mu_2+\beta \mu_1$ and $\tilde{\sigma}_2^2 =\beta^2 \sigma_1^2+\sigma^2_2$. Then the joint distribution of
$(X_1,X_2)$ is $\mathcal{MVN}\;(\boldsymbol{\mu}, \bW)$, where $\bmu$ is the mean parameter and $\bW$ is the precision matrix
\begin{align*}
\bmu=(\mu_1,\mu_2+\beta \mu_1)^T, \ 
\bW^{-1} =\left[ \begin{array}{cc}
    \sigma_1^2 & \beta \sigma_1^2  \\
     \beta \sigma_1^2 & \beta^2\sigma_1^2 + \sigma^2_2
\end{array}\right].
\end{align*}
\item \textit{Continuous to Discrete} ($\mathcal{S}_{cd}$).
We refer to the case where $X_1$ is continuous and $X_2$ is discrete as $\mathcal{S}_{cd}$. Let $X_1 \sim \cN(\mu_1, \sigma^2_1)$ and $X_2 \mid X_1=x_1 \sim \cB e(p)$, where $p= \displaystyle\frac{\exp\{\beta\times(x_1 - \mu_1)\}}{1 + \exp\{\beta \times(x_1 - \mu_1)\}}$ and $\cB e(p)$ denotes a Bernoulli random variable with success probability equal to $p$. 
\item  \textit{Discrete to Continuous} ($\mathcal{S}_{dc}$).
We refer to the case where $X_1$ is discrete and $X_2$ is continuous as $\mathcal{S}_{dc}$, where $X_1 \sim \mathcal{B}e(p)$ and $X_2 \mid X_1=x_1 \sim \cN(\mu_2+\beta x_1, \sigma^2_{2})$, 
so that $X_1$ is a Bernoulli random variable with probability $p$, and $X_2$ is a Gaussian random variable.
\item \textit{Discrete to Discrete} ($\mathcal{S}_{dd}$).
We refer to the case where both node $X_1$ and $X_2$ are discrete variables as $\mathcal{S}_{dd}$. Let $X_1 \sim \mathcal{B}e(p)$. Then $X_2 \mid X_1=1 \sim \mathcal{B}e(0.5 + \frac{\beta}{2})$ and $X_2 \mid X_1=0 \sim \mathcal{B}e(0.5 - \frac{\beta}{2})$, where $\beta \in [0,1]$.
\end{enumerate}
For hybrid Bayesian networks, different strategies are implemented, either RAG or discretization strategy. Section \ref{subsec:theory} compares the performance of different strategies under four data generation processes.


\subsubsection{Theoretical Results} \label{subsec:theory}

Theorem \ref{thm:ratio.scc} demonstrates the expectation of the posterior ratios derived from strategy RAG (and BGe score function) and DISC (and BDe score function) under the scenario $\cS_{cc}$ respectively.
\begin{theorem} [Posterior ratio under $\cS_{cc}$] \label{thm:ratio.scc}
Suppose the data $\bX$, with $N$ samples, is generated from the two continuous variables case $\mathcal{S}_{cc}$ (Section \ref{subsec:dgp}). The expectation of the posterior ratio using RAG is,
\begin{eqnarray}
\lim_{N\rightarrow \infty} r_{10} = \frac{1}{2} \log\left( \frac{\sigma_2^2 + \beta^2 \sigma_1^2}{\sigma_2^2} \right). \label{eqn:bge.scc.app}
\end{eqnarray}

Let $\bX^{'}$ denote the data $\bX$ after discretization (using median as cutting point). The expectation of the posterior ratio using DISC is,
\begin{eqnarray} \label{eqn:upper.bde.scc}
\lim_{N \rightarrow \infty}\tilde{r}_{10} =\log(4) + 2 \tilde{p}_{11} \log(\tilde{p}_{11}) + (1-2\tilde{p}_{11}) \log\left(\frac{1}{2}-\tilde{p}_{11}\right),
\end{eqnarray}
where $\tilde{p}_{11} =\int_{0}^{\infty} \frac{1}{\sqrt{2\pi } } \exp\{- \frac{x^2}{2 } \} \Phi(\frac{\sigma_1\beta x}{\sigma_2})dx $ and $\Phi(\cdot)$ is the cumulative density function of the standard normal distribution.
\end{theorem}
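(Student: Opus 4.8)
The plan is to obtain both displayed limits by substituting the scenario-$\cS_{cc}$ population quantities into the two limit formulas already established in Lemma~\ref{lemma:BGe.posterior.ratio} and Lemma~\ref{lemma:BDe.posterior.ratio}. No fresh asymptotic argument is needed, since those lemmas have already taken $N\to\infty$ and expressed the answer through the covariance matrix of $(X_1,X_2)$ (for BGe) and the population cell probabilities (for BDe). Thus the whole theorem reduces to two essentially algebraic bookkeeping computations.

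For the RAG/BGe part, I would simply read off the covariance matrix of $(X_1,X_2)$ from the definition of $\cS_{cc}$, namely $\Sigma_{11}=\sigma_1^2$, $\Sigma_{12}=\beta\sigma_1^2$, and $\Sigma_{22}=\beta^2\sigma_1^2+\sigma_2^2$. Substituting into Lemma~\ref{lemma:BGe.posterior.ratio} and noting that the denominator collapses, $\Sigma_{11}\Sigma_{22}-\Sigma_{12}^2=\sigma_1^2\sigma_2^2$, while the numerator is $\sigma_1^2(\beta^2\sigma_1^2+\sigma_2^2)$, the common factor $\sigma_1^2$ cancels and yields $\tfrac12\log\!\big((\sigma_2^2+\beta^2\sigma_1^2)/\sigma_2^2\big)$, which is exactly~\eqref{eqn:bge.scc.app}.

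For the DISC/BDe part the work lies in identifying the four population cell probabilities of the median-discretized data $\bX'$. First, because each of $X_1$ and the marginal $X_2$ is Gaussian, its median equals its mean, so cutting at the median produces balanced marginals $p_{1.}=p_{0.}=p_{.1}=p_{.0}=\tfrac12$. Next I would compute $\tilde p_{11}=P(X_1>\mu_1,\,X_2>\tilde\mu_2)$ by writing $Z=(X_1-\mu_1)/\sigma_1\sim\cN(0,1)$ and $X_2-\tilde\mu_2=\beta\sigma_1 Z+\varepsilon$ with $\varepsilon\sim\cN(0,\sigma_2^2)$ independent of $Z$; conditioning on $Z=z>0$ gives $P(\varepsilon>-\beta\sigma_1 z)=\Phi(\sigma_1\beta z/\sigma_2)$, and integrating this against the standard normal density of $Z$ over $z>0$ reproduces the stated integral for $\tilde p_{11}$. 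The balanced-marginal constraints then force the remaining cells, $\tilde p_{10}=\tilde p_{01}=\tfrac12-\tilde p_{11}$ and $\tilde p_{00}=\tilde p_{11}$. Plugging these into Lemma~\ref{lemma:BDe.posterior.ratio} with every marginal product equal to $\tfrac14$, separating each $\log(4\,\tilde p_{ij})$ into $\log 4+\log\tilde p_{ij}$ (the $\log 4$ coefficients summing to $1$), and grouping the equal cells gives $\log 4 + 2\tilde p_{11}\log\tilde p_{11} + (1-2\tilde p_{11})\log(\tfrac12-\tilde p_{11})$, matching~\eqref{eqn:upper.bde.scc}.

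I expect the only genuine obstacle to be the evaluation of $\tilde p_{11}$: one must set up the conditional Gaussian tail probability carefully and verify that the orientation of the inequalities produces $\Phi(\sigma_1\beta z/\sigma_2)$ rather than its reflection, and then confirm the symmetry $\tilde p_{00}=\tilde p_{11}$ either directly from the $(Z,\varepsilon)\mapsto(-Z,-\varepsilon)$ invariance or, more economically, from the marginal constraints above. Everything else is routine algebraic simplification once the cell probabilities are in hand.
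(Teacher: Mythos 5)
Your proposal is correct and follows essentially the same route as the paper's own proof: read off the $\cS_{cc}$ covariance matrix and substitute into Lemma~\ref{lemma:BGe.posterior.ratio} for the RAG part, then compute $\tilde p_{11}$ (your conditional $Z$--$\varepsilon$ decomposition is just a repackaging of the paper's double-integral change of variables), use the balanced-median marginals to pin down the remaining cells, and substitute into Lemma~\ref{lemma:BDe.posterior.ratio}. No gaps; the algebraic simplification to $\log 4 + 2\tilde p_{11}\log\tilde p_{11} + (1-2\tilde p_{11})\log\bigl(\tfrac12-\tilde p_{11}\bigr)$ matches the paper exactly.
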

A proof of Theorem~\ref{thm:ratio.scc} can be found in Appendix D. 
Given the data is generated under $\cS_{cc}$, Figure \ref{post.ratio.scc} demonstrates the $r_{10}$ and $\tilde{r}_{10}$ as $\beta$ increases and fixing $\sigma_1=\sigma_2=1$. We can clearly see that our strategy RAG using BGe outperforms the strategy DISC. Such a conclusion is not surprising, due to the information loss when using DISC. It is noticeable that $\tilde{r}_{10}$ has an upper limit equal to $\log\left( 2\right)$, as you can see the upper limit of $\tilde{p}_{11}$ in Equation \eqref{eqn:upper.bde.scc} is $\frac{1}{2}$. This implies that BDe score function may be unable to show extremely strong dependency between two variables. By contrast, BGe score can effectively capture the signal strength when $\beta$ increases as seen from Equation \eqref{eqn:bge.scc.app}.

\begin{figure}[tb]
    \centering
    \begin{minipage}{0.47\textwidth}
        \centering
        \includegraphics[width=\textwidth]{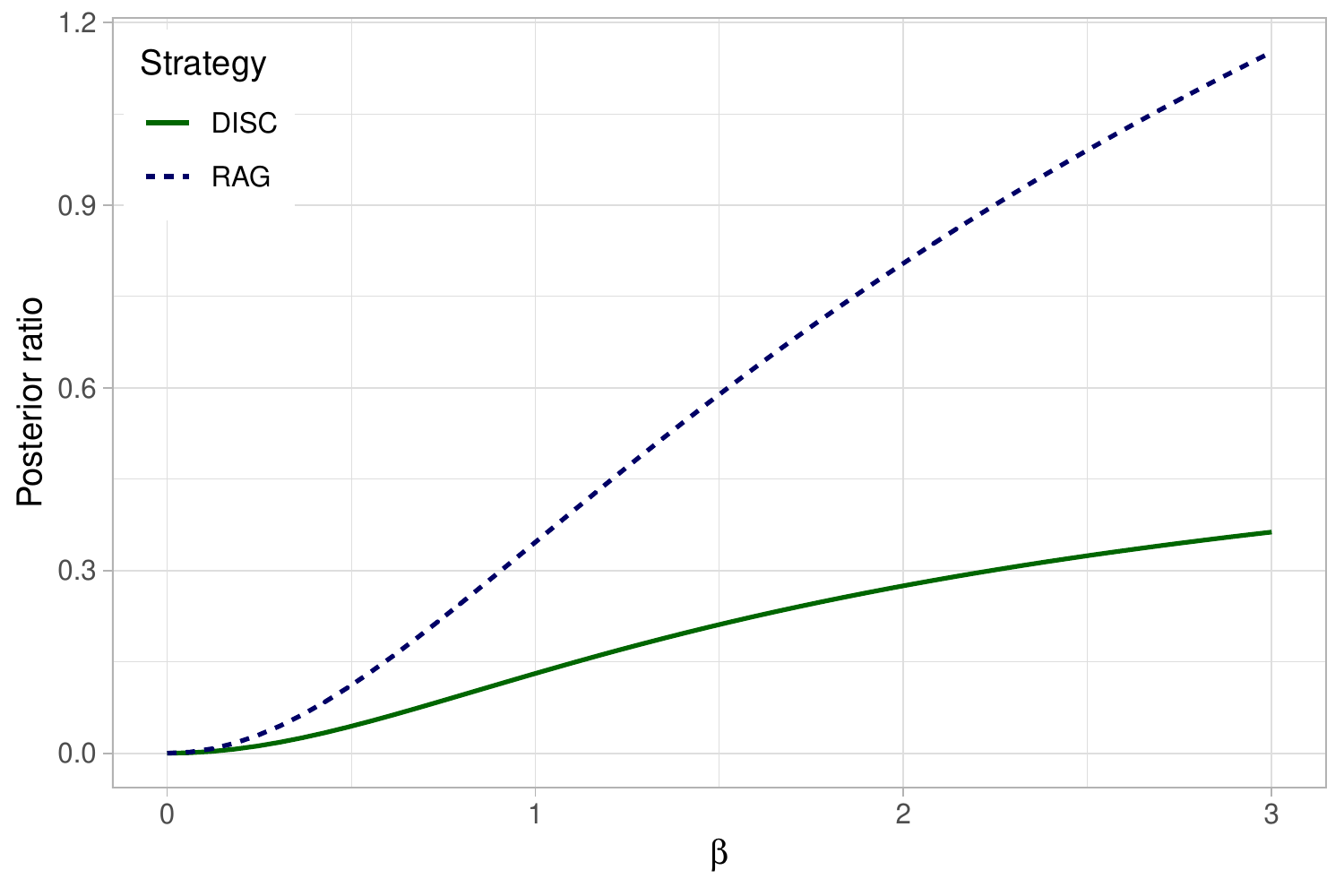} 
        \subcaption{$\cS_{cc}$} \label{post.ratio.scc}
    \end{minipage}
    \begin{minipage}{0.47\textwidth}
        \centering
        \includegraphics[width=\textwidth]{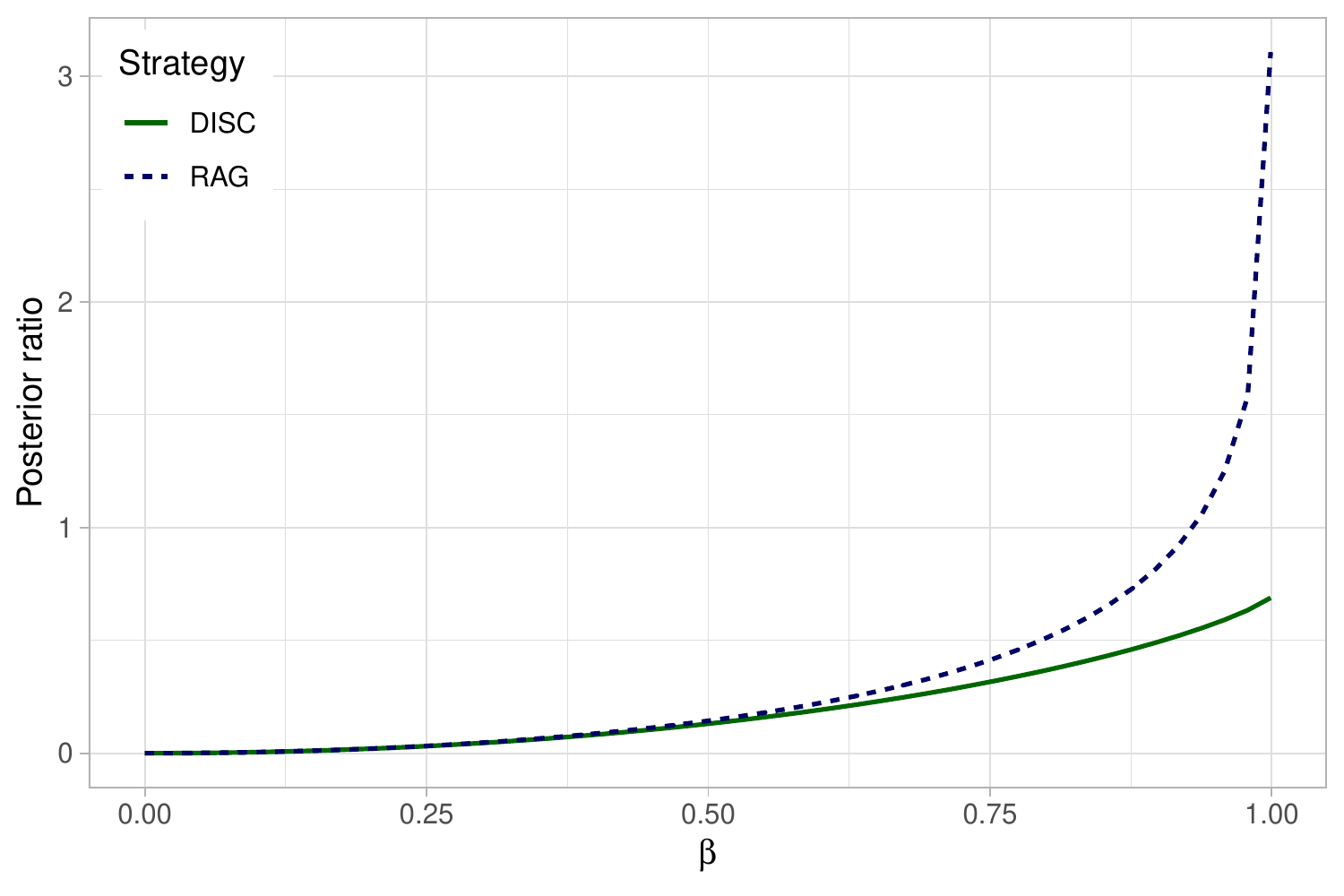} 
        \subcaption{$\cS_{dd}$} \label{post.ratio.sdd}
    \end{minipage}
    
    \begin{minipage}{0.47\textwidth}
        \centering
        \includegraphics[width=\textwidth]{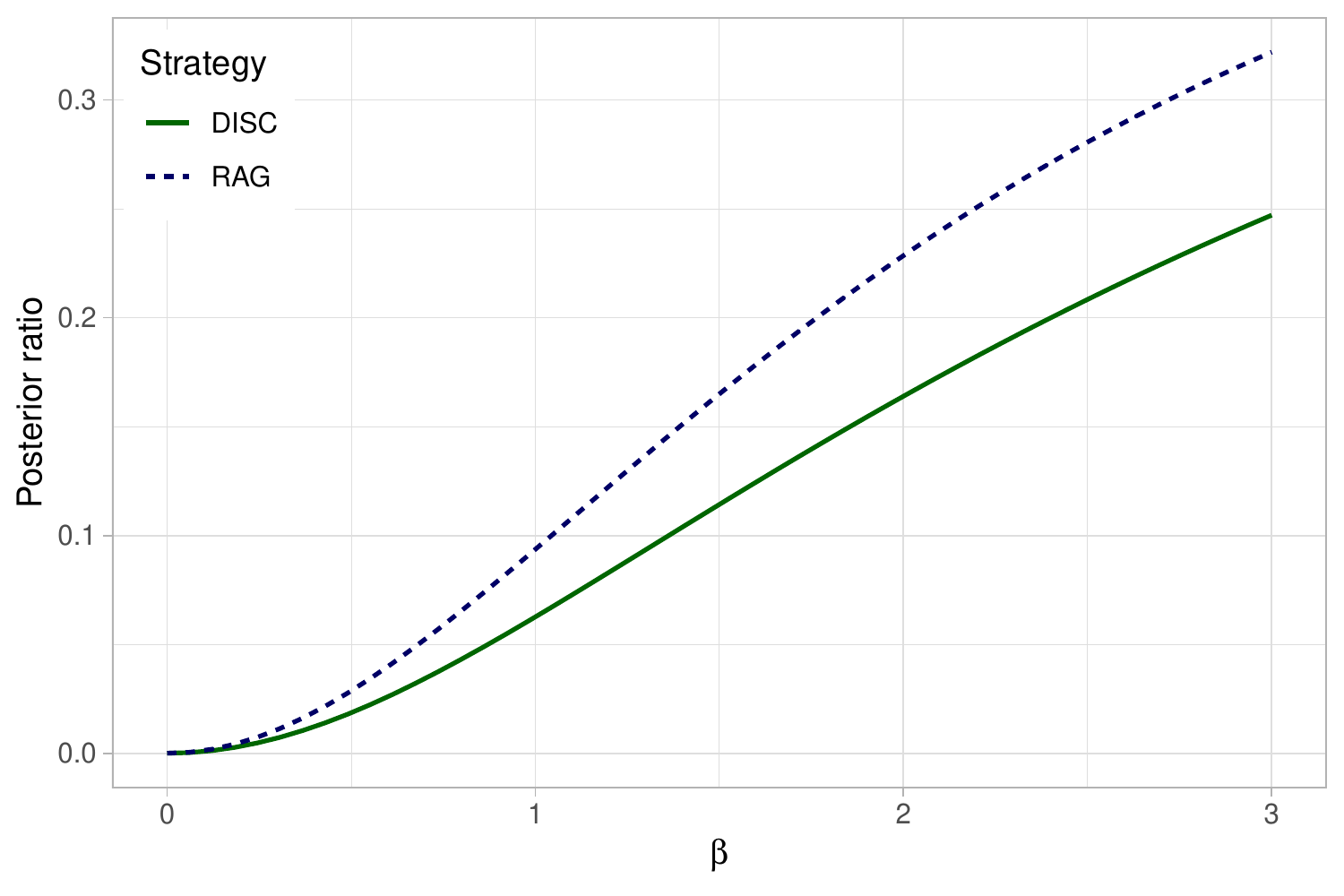} 
        \subcaption{$\cS_{cd}$} \label{post.ratio.scd}
    \end{minipage}
    \begin{minipage}{0.47\textwidth}
        \centering
        \includegraphics[width=\textwidth]{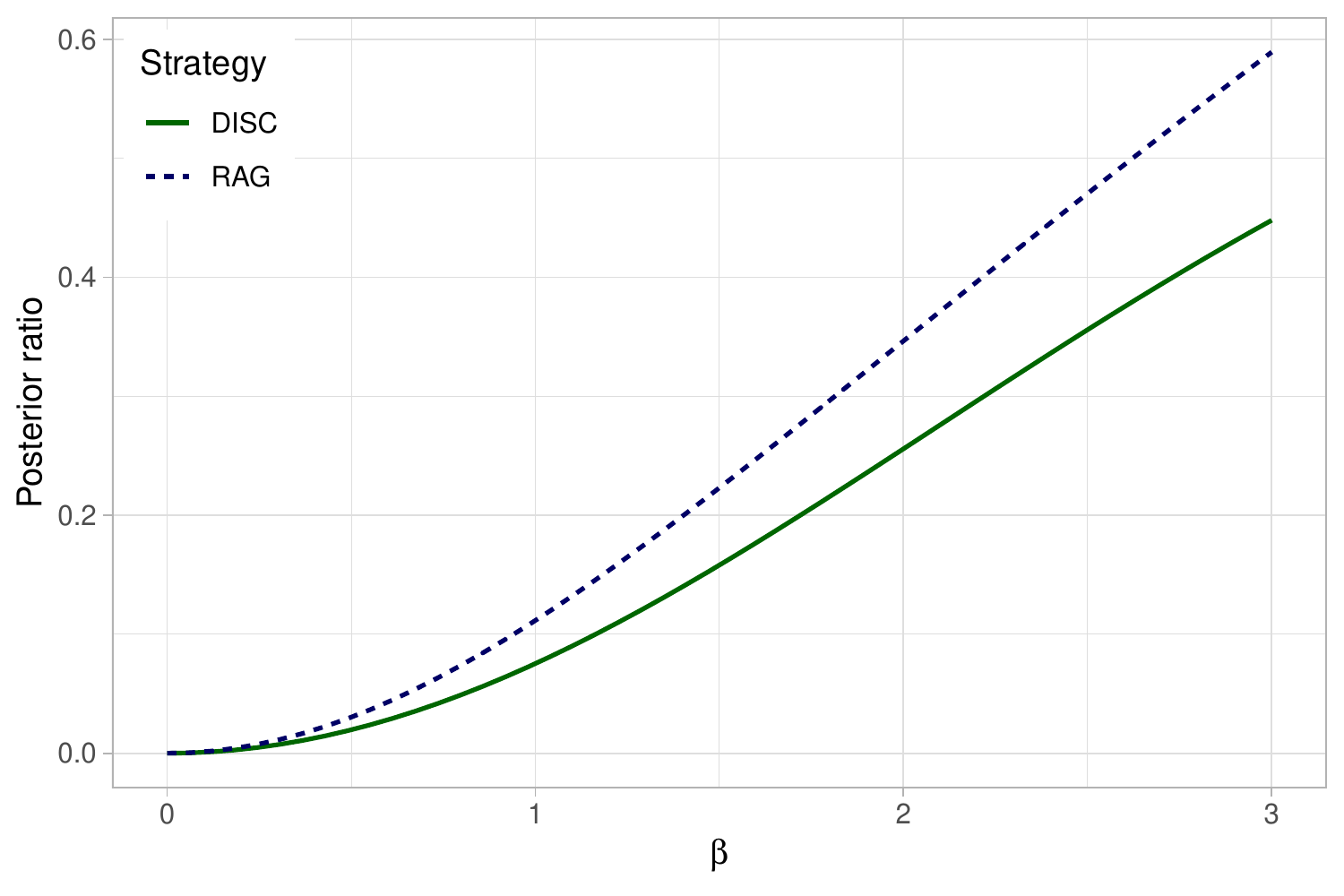} 
        \subcaption{$\cS_{dc}$} \label{post.ratio.sdc}
    \end{minipage}
       \caption{The approximation of posterior ratios $r_{10}$ and $\tilde{r}_{10}$ as $\beta$ increases under the scenarios of \protect\subref{post.ratio.scc} $\mathcal{S}_{cc}$ (continuous to continuous) with $\sigma_1=\sigma_2=1$; \protect\subref{post.ratio.sdd} $\mathcal{S}_{dd}$ (discrete to discrete) as $p = 0.5$ is fixed; \protect\subref{post.ratio.scd} $\mathcal{S}_{cd}$ (continuous to discrete) as $\sigma_1=1$ is fixed; \protect\subref{post.ratio.sdc} $\mathcal{S}_{dc}$ (discrete to continuous) as $\sigma_2=1$ is fixed.}
\end{figure}

Theorem \ref{thm:ratio.scd} demonstrates the expectation of the posterior ratios derived from the strategy RAG (and BGe score function) and DISC (and BDe score function) setting under the scenario $\cS_{cd}$ respectively.
\begin{theorem}  [Posterior ratio under $\cS_{cd}$] \label{thm:ratio.scd}
Suppose the data $\bX$, with $N$ samples, is generated from the case $\mathcal{S}_{cd}$ (Section \ref{subsec:dgp}), where a continuous variable is the parent of a discrete variable. Then, the expectation of the posterior ratio using RAG is,
\begin{eqnarray}
\lim_{N \rightarrow \infty} r_{10}&=& \frac{1}{2} \log \left( \frac{\Sigma_{11} \Sigma_{22}}{ \Sigma_{11} \Sigma_{22} - \Sigma_{12}^2} \right), \label{eqn:bge.scd.app}
\end{eqnarray}
where $\Sigma_{11} = \sigma_1^2$, $\Sigma_{12} = \int_{-\infty}^{+\infty} \frac{t \sigma_1}{\sqrt{2\pi}} \frac{\exp(\beta \sigma_1 t)}{1+ \exp(\beta \sigma_1 t)}  \exp\left\{ -\frac{t^2}{2}\right\}dt$, $\Sigma_{22} = \bbE(X_2) - (\bbE(X_2))^2$ and $\bbE(X_2)= \int_{-\infty}^{+\infty} \frac{1}{\sqrt{2\pi}} \frac{\exp(\beta \sigma_1 t)}{1+ \exp(\beta \sigma_1 t)}  \exp\left\{ -\frac{t^2}{2}\right\}dt$.

Let $\bX^{'}$ denote the data $\bX$ after discretization (using median as cutting point). The expectation of the posterior ratio using DISC is,
\begin{eqnarray} \label{eqn:upper.bde.scd}
\lim_{N \rightarrow \infty}\tilde{r}_{10} =\log(4) + 2 \tilde{p}_{11} \log(\tilde{p}_{11}) + (1-2\tilde{p}_{11}) \log\left(\frac{1}{2}-\tilde{p}_{11}\right),
\end{eqnarray}
where $\tilde{p}_{11} =\int_{0}^{\infty} \frac{1}{\sqrt{2\pi } } \exp\left\{- \frac{x^2}{2 } \right\} \frac{\exp\{\beta x \sigma_1 \}}{1+\exp\{\beta x \sigma_1 \}}dx$.
\end{theorem}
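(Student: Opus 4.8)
The plan is to prove the two claims separately, since each reduces to a lemma already established in the excerpt once the relevant summary statistics of the $\mathcal{S}_{cd}$ data-generating process are computed. For the RAG formula, Lemma~\ref{lemma:BGe.posterior.ratio} already expresses $\lim_{N\to\infty} r_{10}$ purely in terms of the true covariance matrix $\Sigma$ of $(X_1,X_2)$, so the only task is to evaluate its three distinct entries. First I would read off $\Sigma_{11}=\mathrm{Var}(X_1)=\sigma_1^2$ from the marginal of $X_1$. For $\Sigma_{22}$, since $X_2$ is binary we have $X_2^2=X_2$, hence $\mathrm{Var}(X_2)=\bbE(X_2)-(\bbE(X_2))^2$; applying the tower property gives $\bbE(X_2)=\bbE[\,p(X_1)\,]$ with $p$ the logistic link, and the substitution $t=(X_1-\mu_1)/\sigma_1$ turns this into the stated standard-normal integral. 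For $\Sigma_{12}=\mathrm{Cov}(X_1,X_2)=\bbE(X_1X_2)-\bbE(X_1)\bbE(X_2)$ I would again condition on $X_1$, write $\bbE(X_1X_2)=\bbE[X_1\, p(X_1)]$, substitute $X_1=\mu_1+\sigma_1 t$, and observe that the $\mu_1$ contribution cancels against $\mu_1\,\bbE(X_2)$, leaving exactly the claimed integral. Feeding these three quantities into Lemma~\ref{lemma:BGe.posterior.ratio} yields \eqref{eqn:bge.scd.app}.

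For the DISC formula I would invoke Lemma~\ref{lemma:BDe.posterior.ratio}, which gives the limiting $\tilde r_{10}$ as a mutual-information-type sum over the four joint cell probabilities of the discretized pair. The continuous parent $X_1$ is dichotomized at its median, which for a Gaussian equals $\mu_1$, so $X_1'=\mathbb{I}(X_1>\mu_1)$ with $P(X_1'=1)=\tfrac12$; the child $X_2$ is already binary and is left unchanged. Writing $\tilde p_{11}=P(X_1>\mu_1,X_2=1)=\int_{\mu_1}^{\infty} p(x) f_{X_1}(x)\,dx$ and substituting $u=(x-\mu_1)/\sigma_1$ reproduces the stated integral for $\tilde p_{11}$.

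The key step is then a symmetry argument. Using the identity $\frac{e^{-a}}{1+e^{-a}}=\frac{1}{1+e^{a}}$ together with the evenness of the standard-normal density, I would show that the reflection $u\mapsto -u$ interchanges the events $\{X_2=1\}$ and $\{X_2=0\}$ across the median, giving $\tilde p_{00}=\tilde p_{11}$ and $\tilde p_{10}=\tilde p_{01}=\tfrac12-\tilde p_{11}$, so that every marginal equals $\tfrac12$ (equivalently, $\bbE(X_2)=\tfrac12$ for all $\beta$). Substituting these four cell probabilities and the common marginal $\tfrac12$ into Lemma~\ref{lemma:BDe.posterior.ratio}, each logarithm becomes $\log\!\big(4\cdot(\text{cell})\big)$; collecting the $\log 4$ terms (whose coefficients sum to one) and grouping the two symmetric pairs collapses the four-term sum to the claimed expression \eqref{eqn:upper.bde.scd}.

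I expect the main obstacle to be bookkeeping rather than any deep difficulty: the logistic--Gaussian integrals defining $\Sigma_{12}$, $\bbE(X_2)$, and $\tilde p_{11}$ have no elementary closed form and must simply be carried along symbolically. Care is therefore needed at two points, namely (i) justifying the cancellation of the $\mu_1$ term in the computation of $\Sigma_{12}$, and (ii) verifying the reflection symmetry that forces all BDe marginals to equal $\tfrac12$, since it is precisely this symmetry that permits the final collapse to a two-term expression in $\tilde p_{11}$.
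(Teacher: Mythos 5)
Your proposal is correct and follows essentially the same route as the paper's proof: reduce the RAG claim to Lemma~\ref{lemma:BGe.posterior.ratio} by computing $\Sigma_{11}$, $\Sigma_{12}$, $\Sigma_{22}$ via the tower property (with the $\mu_1$ cancellation you note), and reduce the DISC claim to Lemma~\ref{lemma:BDe.posterior.ratio} by computing $\tilde p_{11}$ from the median split and using symmetry to get $\tilde p_{00}=\tilde p_{11}$, $\tilde p_{10}=\tilde p_{01}=\tfrac12-\tilde p_{11}$ with all marginals equal to $\tfrac12$. If anything, your explicit reflection argument for $\tilde p_{00}=\tilde p_{11}$ fills in a step the paper dismisses with ``Similarly.''
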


\begin{proof}
A proof can be found in Appendix E. 
\end{proof}
Given the data is generated under $\mathcal{S}_{cd}$, Figure \ref{post.ratio.scd} demonstrates the $r_{10}$ and $\tilde{r}_{10}$ as $\beta$ increases and fixing $\sigma_1=1$. We can see clearly that our strategy RAG using BGe outperforms the DISC and BDe score function. Similar with the scenario $\cS_{cc}$, Equation \eqref{eqn:upper.bde.scd} indicates that the posterior ratio using DISC has an upper limit as $\log(2)$, since $\tilde{p}_{11} \rightarrow 0.5$ as $\beta \rightarrow \infty$.

Theorem \ref{thm:ratio.sdc} demonstrates the expectation of the posterior ratios derived from the strategy RAG (and BGe score function) and DISC (and BDe score function) under the scenario $\cS_{dc}$ respectively.
\begin{theorem} [Posterior ratio under $\cS_{dc}$] \label{thm:ratio.sdc}
Suppose the data $\bX$, with $N$ samples, is generated from the case $\mathcal{S}_{dc}$ (Section \ref{subsec:dgp}) with $p=\frac{1}{2}$, where a discrete variable is the parent of a continuous variable. Then, the expectation of the posterior ratio using RAG is,
\begin{eqnarray}
\lim_{N\rightarrow \infty}r_{10} = \frac{1}{2} \log \left( 1+ \frac{\beta^2}{4 \sigma_2^2}\right). \label{eqn:bge.sdc.app} 
\end{eqnarray}

Let $\bX^{'}$ denote the data $\bX$ after discretization (using median as cutting point). The expectation of the posterior ratio using DISC is,
\begin{eqnarray}
 \lim_{N \rightarrow \infty}\tilde{r}_{10} = \log(4) + 2 \tilde{p}_{11} \log(\tilde{p}_{11}) + (1-2\tilde{p}_{11}) \log\left(\frac{1}{2}-\tilde{p}_{11}\right),
\end{eqnarray}
where $\tilde{p}_{11} =\frac{1}{2} \int_{-\frac{\beta}{2 \sigma_2}}^{\infty} \frac{1}{\sqrt{2\pi } } \exp\{- \frac{x^2}{2 } \}dx $. 
\end{theorem}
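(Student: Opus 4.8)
The plan is to establish the two displayed formulas separately, obtaining each as a specialization of the general expressions already proved in Lemma~\ref{lemma:BGe.posterior.ratio} (for RAG/BGe) and Lemma~\ref{lemma:BDe.posterior.ratio} (for DISC/BDe) to the covariance structure and cell probabilities induced by $\cS_{dc}$ with $p=\frac12$. So the entire argument reduces to two moment/probability computations plus routine algebra.

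For the RAG claim, by Lemma~\ref{lemma:BGe.posterior.ratio} it suffices to compute the covariance matrix $\Sigma$ of $(X_1,X_2)$ and substitute into $\frac12\log(\Sigma_{11}\Sigma_{22}/(\Sigma_{11}\Sigma_{22}-\Sigma_{12}^2))$. I would obtain the three entries by conditioning on $X_1$: since $X_1\sim\cB e(\frac12)$ we get $\Sigma_{11}=\frac14$; the law of total variance gives $\Sigma_{22}=\beta^2\mathrm{Var}(X_1)+\sigma_2^2=\frac{\beta^2}{4}+\sigma_2^2$; and $\Sigma_{12}=\bbE(X_1X_2)-\bbE(X_1)\bbE(X_2)=\beta\,\mathrm{Var}(X_1)=\frac{\beta}{4}$. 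Substituting yields $\Sigma_{11}\Sigma_{22}-\Sigma_{12}^2=\frac{\sigma_2^2}{4}$, so the ratio inside the logarithm collapses to $1+\frac{\beta^2}{4\sigma_2^2}$, which is exactly Equation~\eqref{eqn:bge.sdc.app}.

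For the DISC claim, the key modeling point is that only the continuous variable $X_2$ is discretized, while $X_1$ is already binary, so $X_1'=X_1$. I would first pin down the cut point. The marginal law of $X_2$ is the equal-weight mixture $\frac12\cN(\mu_2,\sigma_2^2)+\frac12\cN(\mu_2+\beta,\sigma_2^2)$, which is symmetric about $\mu_2+\frac{\beta}{2}$; hence its median, used as the cut point, is exactly $m=\mu_2+\frac{\beta}{2}$. Conditioning on each value of $X_1$ and standardizing then gives $P(X_2>m\mid X_1=1)=\Phi(\frac{\beta}{2\sigma_2})$ and $P(X_2>m\mid X_1=0)=\Phi(-\frac{\beta}{2\sigma_2})$, so that $\tilde p_{11}=\tilde p_{00}=\frac12\Phi(\frac{\beta}{2\sigma_2})$ and $\tilde p_{10}=\tilde p_{01}=\frac12\Phi(-\frac{\beta}{2\sigma_2})=\frac12-\tilde p_{11}$. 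This matches the stated $\tilde p_{11}=\frac12\int_{-\beta/(2\sigma_2)}^{\infty}\frac{1}{\sqrt{2\pi}}e^{-x^2/2}dx$, and all four marginals equal $\frac12$.

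The final step is to substitute these cell and marginal probabilities into Lemma~\ref{lemma:BDe.posterior.ratio}. Because every marginal product equals $\frac14$, each summand takes the form $\tilde p_{ij}\log(4\tilde p_{ij})$; grouping the two equal pairs and splitting off the $\log 4$ factors, the coefficient multiplying $\log 4$ sums to $2\tilde p_{11}+(1-2\tilde p_{11})=1$, leaving precisely $\log(4)+2\tilde p_{11}\log\tilde p_{11}+(1-2\tilde p_{11})\log(\frac12-\tilde p_{11})$. I expect the main obstacle to be not the algebra but justifying the cut-point computation: showing that the mixture median is $\mu_2+\frac\beta2$ and that both induced marginals are balanced relies on the reflection symmetry that holds only because $p=\frac12$, which is exactly the hypothesis of the theorem. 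Without it the median would lack a closed form and the clean cancellation of the $\log 4$ terms would break down, so I would take care to state the symmetry argument explicitly rather than treat it as obvious.
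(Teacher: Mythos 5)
Your proposal is correct and follows essentially the same route as the paper's proof: both parts specialize Lemma~\ref{lemma:BGe.posterior.ratio} by computing $\Sigma = \bigl[\begin{smallmatrix} 1/4 & \beta/4 \\ \beta/4 & \beta^2/4+\sigma_2^2\end{smallmatrix}\bigr]$, and Lemma~\ref{lemma:BDe.posterior.ratio} by computing the cell probabilities $\tilde p_{11}=\tilde p_{00}$, $\tilde p_{10}=\tilde p_{01}=\tfrac12-\tilde p_{11}$ from the cut point $\mu_2+\tfrac{\beta}{2}$. Your explicit symmetry argument identifying the mixture median is a welcome refinement of a step the paper leaves implicit, but it does not change the substance of the argument.
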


\begin{proof}
A proof can be found in Appendix F. 
\end{proof}

Given the data is generated under $\mathcal{S}_{dc}$, Figure \ref{post.ratio.sdc} demonstrates the $r_{10}$ and $\tilde{r}_{10}$ as $\beta$ increases and fixing $\sigma_2=1$. We can see clearly that our strategy RAG using BGe outperforms the DISC strategy. Similarly, $\tilde{r}_{10}$ has an upper limit $\log(2)$ since $\tilde{p}_{11} \rightarrow 0.5$ as $\beta$ increases. While RAG can always capture the dependency as $\beta$ changes, as shown in Equation \eqref{eqn:bge.sdc.app}.

Theorem \ref{thm:ratio.sdd} demonstrates the expectation of the posterior ratios derived from the strategy RAG (and BGe score function) and DISC (and BDe score function) under the scenario $\cS_{dd}$ respectively.

\begin{theorem} [Posterior ratio under $\cS_{dd}$] \label{thm:ratio.sdd}
Suppose the data $\bX$, with $N$ samples, is generated from the case $\mathcal{S}_{dd}$ (Section \ref{subsec:dgp}), where a discrete variable is the parent of another discrete variable. Then, the expectation of the posterior ratio using RAG is,
\begin{eqnarray}
\lim_{N \rightarrow \infty}r_{10} =  \frac{1}{2} \log\left( \frac{1- (2p-1)^2 \beta^2}{1- \beta^2} \right). \label{eqn:bge.sdd.app}
\end{eqnarray}
The expectation of the posterior ratio using DISC is,
\begin{eqnarray*}
\lim_{N \rightarrow \infty}\tilde{r}_{10} & = & p_{11} \log\left(\frac{p_{11}}{p_{1.} p_{.1}}\right) + p_{10} \log\left(\frac{p_{10}}{  p_{1.} p_{.0} }\right) + p_{01} \log\left(\frac{ p_{01}}{ p_{0.} p_{.1}}\right) + p_{00} \log\left(\frac{ p_{00}}{p_{0.} p_{.0}}\right),
\end{eqnarray*}
where $p_{11} = p (0.5+\frac{\beta}{2}), p_{10} = p (0.5-\frac{\beta}{2}), p_{01} = (1-p) (0.5-\frac{\beta}{2})$, $p_{00} = (1-p) (0.5+\frac{\beta}{2})$, $p_{1.} = p_{11}+ p_{10}$, $p_{0.} = p_{01}+ p_{00}$, $p_{.1} = p_{11}+ p_{01}$ and $p_{.0} = p_{10}+ p_{00}$.
\end{theorem}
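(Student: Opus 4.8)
The plan is to recognize that neither half of the theorem requires fresh analysis: both the RAG and the DISC formulas are immediate consequences of the two scoring lemmas already established, so the entire task reduces to computing the relevant second moments (for BGe) and joint cell probabilities (for BDe) under the $\cS_{dd}$ generating process. Since both $X_1$ and $X_2$ are already binary, the discretization step is vacuous, i.e. $\bX' = \bX$, which is what makes the DISC half essentially free.

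For the RAG half I would invoke Lemma \ref{lemma:BGe.posterior.ratio}, which expresses $\lim_N r_{10}$ purely through the covariance matrix $\Sigma$ of $(X_1, X_2)$; the only work is to evaluate $\Sigma$ under $\cS_{dd}$. First I compute the marginals: $X_1 \sim \cB e(p)$ gives $\Sigma_{11} = p(1-p)$ directly, and the law of total probability gives $\bbE(X_2) = \tfrac{1}{2} + \tfrac{\beta(2p-1)}{2}$, whence $\Sigma_{22} = \bbE(X_2)(1-\bbE(X_2)) = \frac{1 - \beta^2(2p-1)^2}{4}$. For the off-diagonal term I use $\bbE(X_1 X_2) = P(X_1 = 1, X_2 = 1) = p(\tfrac{1}{2}+\tfrac{\beta}{2})$, so that $\Sigma_{12} = \bbE(X_1 X_2) - \bbE(X_1)\bbE(X_2)$ collapses to $\beta p(1-p)$.

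The one genuinely non-mechanical step is the simplification of the denominator $\Sigma_{11}\Sigma_{22} - \Sigma_{12}^2$. I would substitute the three expressions and factor out $p(1-p)$, leaving the bracket $\frac{1 - \beta^2(2p-1)^2}{4} - \beta^2 p(1-p)$; the crux is the elementary identity $(2p-1)^2 + 4p(1-p) = 1$, which causes the $p$-dependence inside the bracket to cancel and reduces it to $\frac{1-\beta^2}{4}$. Taking the ratio $\Sigma_{11}\Sigma_{22}/(\Sigma_{11}\Sigma_{22}-\Sigma_{12}^2)$ then cancels the common factor $p(1-p)/4$ and yields exactly $\frac{1-(2p-1)^2\beta^2}{1-\beta^2}$, giving \eqref{eqn:bge.sdd.app} after the $\tfrac{1}{2}\log$. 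This identity is the only place where I expect to have to be careful; everything else is substitution.

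For the DISC half I would apply Lemma \ref{lemma:BDe.posterior.ratio} verbatim, since the binary data require no discretization. It remains only to check that the four cell probabilities it takes as input agree with the stated values: $P(X_1=1,X_2=1) = p(\tfrac12+\tfrac{\beta}{2})$, $P(X_1=1,X_2=0) = p(\tfrac12-\tfrac{\beta}{2})$, $P(X_1=0,X_2=1) = (1-p)(\tfrac12-\tfrac{\beta}{2})$ and $P(X_1=0,X_2=0) = (1-p)(\tfrac12+\tfrac{\beta}{2})$, each of which follows in one line from the conditional specification of $X_2 \mid X_1$ in $\cS_{dd}$ together with $P(X_1=1)=p$. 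Substituting these into the formula of Lemma \ref{lemma:BDe.posterior.ratio} reproduces the claimed expression immediately, completing the proof.
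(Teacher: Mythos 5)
Your proposal is correct and follows essentially the same route as the paper's own proof: compute the covariance matrix $\Sigma$ under $\cS_{dd}$ and substitute into Lemma \ref{lemma:BGe.posterior.ratio} for the RAG half, then feed the four cell probabilities into Lemma \ref{lemma:BDe.posterior.ratio} for the DISC half. If anything, you are slightly more complete than the paper, which states the simplification $\Sigma_{11}\Sigma_{22}/(\Sigma_{11}\Sigma_{22}-\Sigma_{12}^2) = \bigl(1-(2p-1)^2\beta^2\bigr)/(1-\beta^2)$ without showing the cancellation, whereas you make explicit the key identity $(2p-1)^2 + 4p(1-p) = 1$ that drives it.
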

\begin{proof}
A proof can be found in Appendix G. 
\end{proof}

Given the data is generated under $\mathcal{S}_{dd}$, Figure \ref{post.ratio.sdd} demonstrates the $r_{10}$ and $\tilde{r}_{10}$ as $\beta$ increases and fixing $p=0.5$. We can see clearly that our strategy RAG using BGe outperforms the strategy DISC using BDe score function.

It is surprising to see that the RAG outperforms DISC under $\mathcal{S}_{dd}$. As $\beta \rightarrow 1$, $r_{10} \rightarrow \infty$, while $\tilde{r}_{10}$ will hit its upper limit $ \log(2)$.
It is anti-intuitive at the first glance that RAG is better than DISC under scenario $\cS_{dd}$. The explanation is that BGe score is sensitive to the changes of $\beta$, especially when $\beta$ is large. In an extreme case, as $\beta \rightarrow 1$ in the case of $\cS_{dd}$, the variance of the conditional Gaussian distribution will be close to 0, and the binary data is extremely unbalanced, i.e., $x_1 = 1$ leads to $x_2 =1$ with probability $1$. The posterior ratio of such data using BGe score function will approach $\infty$ rapidly. This phenomenon indicates that BGe score becomes beneficial when the sample size is small and/or $\beta$ is relatively small. 

In summary, for all the four scenarios, DISC results in information loss, leading to a metric $\tilde{r}_{10}$ with an upper bound. On contrary, RAG strategy does not suffer from such a constraint except the scenario $\cS_{cd}$. RAG can effectively capture the change of dependency between the two nodes. Thus, RAG is preferable to DISC strategy.

\section{Simulation Study} \label{sec:simulations}
\subsection{Simulation Settings}

\subsubsection{Settings of BN with 2 Nodes} \label{subsubsec:sim.2nodes}
A total of 100 data sets were generated for each of the four scenarios ($\cS_{cc}$, $\cS_{cd}$, $\cS_{dc}$, and $\cS_{dd}$) for simulation study. Each data set contains 2 variables $A$ and $B$, with 200 observations for each variable. For all the scenarios, the true DAG is $A \rightarrow B$. The parameter values used in each data generation process are given as follows,
\begin{itemize}
    \item [$\cS_{cc}$:] Both nodes $A$ and $B$ are continuous random variables, where $A \sim \mathcal{N}(\mu_A = -1, \sigma_A = 1)$ and $B \mid A = a \sim \mathcal{N}(\beta a, 1)$.
    \item [$\cS_{cd}$:] Node $A$ is a continuous random variable, $A \sim \mathcal{N}(\mu_A = -1, \sigma_A = 1)$, while node $B$ is a Bernoulli random variable with probability $p = \displaystyle\frac{\exp\big(\beta (a - \mu_A)\big)}{1 + \exp\big(\beta(a - \mu_A)\big)}.$
    \item [$\cS_{dc}$:] Node $A$ is a Bernoulli random variable with probability $p = 0.5$, while $B$ is a continuous random variable, $B \mid A = a \sim \mathcal{N}(\beta a, 1)$.
    \item [$\cS_{dd}$:] Both nodes $A$ and $B$ are discrete random variables, where node $A$ is a Bernoulli random variable with probability $p = 0.1$. Node $B$ is defined as follows: $B \mid A = 0 \sim \cB e( 0.5 - \frac{\beta}{2})$ and $B \mid A = 1 \sim \cB e( \frac{\beta}{2} + 0.5)$.
\end{itemize}

\subsubsection{Settings of BN with 4 Nodes}
To understand the performance of RAG in a broader scope, simulation studies were also conducted for Bayesian networks with 4 nodes, including both continuous and discrete variables. The true DAG structure is visualized in Figure~\ref{fig:dag4nodes}.
\begin{figure}[tb]
    \centering
    \begin{tikzpicture}
    \node[shape=circle,draw=black,fill=RoyalBlue!25] (A) at (3,2) {$A$};
    \node[shape=circle,draw=black,fill=ForestGreen!25] (B) at (0,0) {$B$};
    \node[shape=circle,draw=black,fill=RoyalBlue!25] (C) at (6,2) {$C$};
    \node[shape=circle,draw=black,fill=ForestGreen!25] (D) at (9,0) {$D$};
    
    \draw[->,line width=0.25mm] (A) -- (B);
    \draw[->,line width=0.25mm] (C) -- (B);
    \draw[->,line width=0.25mm] (C) -- (D);
    \draw[->,line width=0.25mm] (A) -- (D);
    \end{tikzpicture}
    \caption{A toy example of Bayesian network.}
    \label{fig:dag4nodes}
\end{figure}
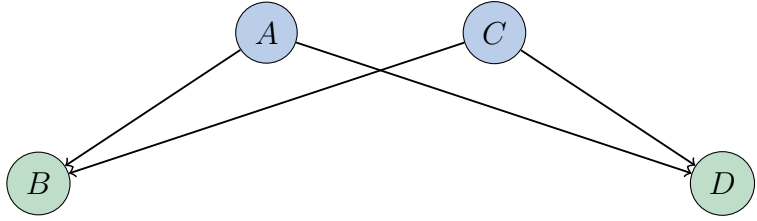

In total, 100 data sets were generated for each scenario. Each data set contains four variables, labelled $A$ to $D$, with 200 observations for each variable. The parameter values used in each data generation process are given as follows,
\begin{itemize}
    \item [$\cS_{cc}$:] All nodes are continuous random variables, where $A \sim \mathcal{N}(\mu_A = -3, \sigma_A = 1)$ and $C \sim \mathcal{N}(\mu_C = 6, \sigma_C = 1)$ are parent nodes. $B \mid A = a, C = c \sim \mathcal{N}(1.5a + 3c, 1)$ and $D \mid A = a, C = c \sim \mathcal{N}(2a + 1.5c, 1)$.
    \item [$\cS_{cd}$:] Parent nodes $A$ and $C$ are continuous random variables, where $A \sim \mathcal{N}(\mu_A = -3, \sigma_A = 1)$ and $C \sim \mathcal{N}(\mu_C = 6, \sigma_C = 1)$. Both $B$ and $D$ are Bernoulli random variables with probabilities:
    $p_B = \displaystyle\frac{\exp\big\{2(a + c - \mu_A - \mu_C)\big\}}{1 + \exp\big\{2(a + c - \mu_A - \mu_C)\big\}}$ and $p_D = \displaystyle\frac{\exp\big\{-1.5(a + c - \mu_A - \mu_C)\big\}}{1 + \exp\big\{-1.5(a + c - \mu_A - \mu_C)\big\}}$.
    \item [$\cS_{dc}$:] Nodes $A$ and $C$ are discrete random variables with $A$ having 2 labels $``1", ``2"$, $C$ having 4 labels $``1", ``2", ``3", ``4"$ simulated randomly. Nodes $B$ and $D$ are continuous random variables with $B \mid A = a, C = c \sim \mathcal{N}(1.5a + 3c, 1)$ and $D \mid A = a, C = c \sim \mathcal{N}(2a + 1.5c, 1)$.
    \item [$\cS_{dd}$:] All nodes are discrete random variables, where parent nodes $A \sim \cB e(0.5)$ and $C$ randomly sampled with replacement from $\{1, 2, 3, 4\}$. Both $B$ and $D$ are Bernoulli random variables with the following probabilities shown in Table \ref{tab:cpd}. 
\end{itemize}

\begin{table}[tb]
        \centering
        \resizebox{\textwidth}{!}{%
            \begin{tabular}{ccccc} 
            \toprule
             $C = c$ & $p(B=1|A=1,C=c)$ & $p(D=1|A=1,C=c)$ & $p(B=1|A=2,C=c)$ & $p(D=1|A=2,C=c)$ \\
             \midrule
             1 & 0.05 & 0.95 & 0.10 & 0.90\\
             2 & 0.10 & 0.90 & 0.30 & 0.70 \\ 
             3 & 0.30 & 0.70 & 0.70 & 0.30 \\
             4 & 0.70 & 0.30 & 0.95 & 0.05 \\
             \bottomrule
            \end{tabular}
            }
            \caption{The conditional distributions of $B$ and $D$ given $A$ and $C$.}
            \label{tab:cpd}
        \end{table}

\subsection{Methods and Evaluation Criteria}

Partition MCMC was used for structure learning using the publicly available \textbf{R} package \emph{BiDAG}, with 100,000 iterations implemented on each of these data sets. The two scores in the package are ``{\it bge}" and ``{\it bdecat}", which correspond to BGe and BDe score functions respectively. In terms of the hyperparameters, we adopt the default values in {\it BiDAG} package, i.e., $\alpha_\mu = 1, \alpha_w = n + \alpha_\mu + 1,  t = \alpha_\mu\times(\alpha_w - n - 1)/(\alpha_\mu + 1)$ for ``{\it bge}" score.

Two strategies are compared through the simulation: RAG and discretization (DISC) strategy. RAG considers all the variables in a Bayesian network to be continuous and run partition MCMC with BGe score. Whenever DISC is used, if necessary, to transform a continuous variable to be discrete, the equal quantile rule applies. Let DISC-$q$ denotes the discretizing a continuous variable into a $q$-level discrete variable. For example, DISC-$4$ indicates that a continuous variable is discretized into a 4-level discrete variable by adopting its 25\%, 50\%, 75\% quantiles as cutting points.

The well known methods, such as the conditional linear Gaussian (CLG) model, is not included in the simulation. This is due to the fact that, by definition, CLG model does not allow continuous nodes to have discrete nodes.
The following performance criteria are adopted.
\begin{itemize}
    \item True positive (TP) refers to the number of edges in the skeleton of the estimated graph which are also present in the skeleton of the true graph. The higher TP is the better performance is.
    \item False positive (FP) refers to the number of edges in the skeleton of the estimated graph which are absent in the skeleton of the true graph.
    \item False negative (FN) is the number of edges in the skeleton of the true graph which are absent in the skeleton of the estimated graph.
    \item Structural Hamming Distance (SHD) between two graphs is computed as $\text{SHD} = \text{FN} + \text{FP} + \#\text{\{edges with an error in direction\}}$. Lower SHD corresponds to better performance.
    \item True positive rate (TPR) is given by $\frac{\text{TP}}{\text{TP} + \text{FN}}$;
    \item Frequency ratio (FR) is computed as the ratio between $\cG_1$ (including its equivalence class) and $\cG_0$ in the posterior samples. Suppose $c_1^{(1)}, c_1^{(2)}, \cdots, c_1^{(T)}$ denote the number of $\cG_1$ in all the data replications. Similarly, $c_0^{(1)}, c_0^{(2)}, \cdots, c_0^{(T)}$ denote the number of $\cG_0$ in all the data replications. Mathematically, FR = $\frac{\sum_{i=1}^T c_1^{(i)} }{\sum_{i=1}^T c_0^{(i)}}$. The higher FR is, the more capable the approach is.
\end{itemize}
\subsection{Simulation Results}

\subsubsection{Results of BN with 2 Nodes}

The performances of RAG and DISC-2 strategies under the four considered scenarios are recorded in Table~\ref{tab:2nodes.results}. For $\cS_{cc}$, where both of the variables are continuous, it can be clearly observed that the proposed strategy RAG using BGe score always outperform DISC in terms of all the performance criteria. This observation is expected since both variables in the simulated data sets are generated from continuous distributions and certain information will be lost during the discretization process when BDe score function is used.

\begin{table*}[tb]\centering
\resizebox{\textwidth}{!}{%
\begin{tabular}{ccrrcrrcrrcrrcrrcrr}
\toprule
& & \multicolumn{2}{c}{$\beta = 0.05$} & \phantom{abc}& \multicolumn{2}{c}{$\beta = 0.1$} &
\phantom{abc} & \multicolumn{2}{c}{$\beta = 0.5$} &
\phantom{abc} & \multicolumn{2}{c}{$\beta = 1$} &
\phantom{abc} & \multicolumn{2}{c}{$\beta = 1.5$} &
\phantom{abc} & \multicolumn{2}{c}{$\beta = 2$}\\
\cmidrule{3-4} \cmidrule{6-7} \cmidrule{9-10} \cmidrule{12-13} \cmidrule{15-16} \cmidrule{18-19}
Scenario & Metric & RAG & DISC-2 && RAG & DISC-2 && RAG & DISC-2 && RAG & DISC-2 && RAG & DISC-2 && RAG & DISC-2 \\\midrule
\multirow{6}{*}{$\cS_{cc}$} & SHD & 0.99 & 0.99 & & 0.97 & 0.97 & & 0.47 & 0.59 & & 0.50 & 0.50 & & 0.50 & 0.51 & & 0.54 & 0.52 \\
& TP & 0.03 & 0.01 & & 0.16 & 0.06 & & 1.00 & 0.86 & & 1.00 & 1.00 & & 1.00 & 1.00 & & 1.00 & 1.00\\
& FP & 0.00 & 0.00 & & 0.00 & 0.00 & & 0.00 & 0.00 & & 0.00 & 0.00 & & 0.00 & 0.00 & & 0.00 & 0.00 \\
& FN & 0.97 & 0.99 & & 0.84 & 0.94 & & 0.00 & 0.14 & & 0.00 & 0.00 & & 0.00 & 0.00 & & 0.00 & 0.00 \\
& TPR & 0.00 & 0.01 & & 0.16 & 0.06 & & 1.00 & 0.86 & & 1.00 & 1.00 & & 1.00 & 1.00 & & 1.00 & 1.00\\
& FR & 0.23 & 0.11 & & 0.52 & 0.24 & & 14244 & 10.17 & & $\infty$ & 76922 & & $\infty$ & $\infty$ & & $\infty$ & $\infty$\\

\midrule

\multirow{6}{*}{$\cS_{cd}$} & SHD & 1.00 & 1.00 & & 0.98 & 0.98 & & 0.70 & 0.76 & & 0.55 & 0.44 & & 0.53 & 0.46 & & 0.53 & 0.44 \\
& TP & 0.01 & 0.00 & & 0.04 & 0.02 & & 0.78 & 0.40 & & 1.00 & 0.98 & & 1.00 & 1.00 & & 1.00 & 1.00\\
& FP & 0.00 & 0.00 & & 0.00 & 0.00 & & 0.00 & 0.00 & & 0.00 & 0.00 & & 0.00 & 0.00 & & 0.00 & 0.00 \\
& FN & 0.99 & 1.00 & & 0.96 & 0.98 & & 0.22 & 0.60 & & 0.00 & 0.02 & & 0.00 & 0.00 & & 0.00 & 0.00 \\
& TPR & 0.01 & 0.00 & & 0.04 & 0.02 & & 0.78 & 0.40 & & 1.00 & 0.98 & & 1.00 & 1.00 & & 1.00 & 1.00\\
& FR & 0.24 & 0.07 & & 0.33 & 0.12 & & 6.13 & 1.38 & & 8763 & 63.69 & & $\infty$ & 175437 & & $\infty$ & $\infty$\\

\midrule

\multirow{6}{*}{$\cS_{dc}$} & SHD & 1.00 & 1.00 & & 0.96 & 0.99 & & 0.53 & 0.72 & & 0.50 & 0.56 & & 0.54 & 0.46 & & 0.53 & 0.56 \\
& TP & 0.01 & 0.01 & & 0.06 & 0.01 & & 0.91 & 0.48 & & 0.98 & 0.98 & & 1.00 & 1.00 & & 1.00 & 1.00\\
& FP & 0.00 & 0.00 & & 0.00 & 0.00 & & 0.00 & 0.00 & & 0.00 & 0.00 & & 0.00 & 0.00 & & 0.00 & 0.00 \\
& FN & 0.99 & 0.99 & & 0.94 & 0.99 & & 0.09 & 0.52 & & 0.02 & 0.02 & & 0.00 & 0.00 & & 0.00 & 0.00 \\
& TPR & 0.01 & 0.01 & & 0.06 & 0.01 & & 0.91 & 0.48 & & 0.98 & 0.98 & & 1.00 & 1.00 & & 1.00 & 1.00\\
& FR & 0.30 & 0.09 & & 0.37 & 0.12 & & 16.87 & 1.75 & & 12120 & 103.84 & & $\infty$ & 175437 & & $\infty$ & $\infty$\\

\midrule

& & \multicolumn{2}{c}{$\beta = 0.1$} & \phantom{abc}& \multicolumn{2}{c}{$\beta = 0.25$} &
\phantom{abc} & \multicolumn{2}{c}{$\beta = 0.4$} &
\phantom{abc} & \multicolumn{2}{c}{$\beta = 0.6$} &
\phantom{abc} & \multicolumn{2}{c}{$\beta = 0.75$} &
\phantom{abc} & \multicolumn{2}{c}{$\beta = 0.9$}\\
\cmidrule{3-4} \cmidrule{6-7} \cmidrule{9-10} \cmidrule{12-13} \cmidrule{15-16} \cmidrule{18-19}
\multirow{7}{*}{$\cS_{dd}$}  & Metric & RAG & DISC-2 && RAG & DISC-2 && RAG & DISC-2 && RAG & DISC-2 && RAG & DISC-2 && RAG & DISC-2 \\
\cmidrule{2-19}
& SHD & 0.97 & 1.00 & & 0.73 & 0.95 & & 0.51 & 0.77 & & 0.44 & 0.30 & & 0.41 & 0.30 & & 0.48 & 0.38 \\
& TP & 0.06 & 0.01 & & 0.48 & 0.25 & & 0.97 & 0.73 & & 1.00 & 0.99 & & 1.00 & 0.99 & & 1.00 & 1.00\\
& FP & 0.00 & 0.00 & & 0.00 & 0.00 & & 0.00 & 0.00 & & 0.00 & 0.00 & & 0.00 & 0.00 & & 0.00 & 0.00 \\
& FN & 0.94 & 0.99 & & 0.52 & 0.75 & & 0.03 & 0.27 & & 0.00 & 0.01 & & 0.00 & 0.00 & & 0.00 & 0.00 \\
& TPR & 0.06 & 0.01 & & 0.48 & 0.25 & & 0.97 & 0.73 & & 1.00 & 0.99 & & 1.00 & 1.00 & & 1.00 & 1.00\\
& FR & 0.92 & 0.20 & & 2.65 & 0.82 & & 27.38 & 4.54 & & 14597 & 197.41 & & 2499999 & 108694 & & $\infty$ & $\infty$\\
\bottomrule
\end{tabular}
}
\caption{The average metrics across 100 data replications for 2 nodes BN under $\cS_{cc}$, $\cS_{cd}$, $\cS_{dc}$ and $\cS_{dd}$. The data was generated according to the setting in Section \ref{subsubsec:sim.2nodes}.} \label{tab:2nodes.results} 
\end{table*}

The remaining scenarios presented in Table~\ref{tab:2nodes.results} exhibit a similar pattern as seen under $\cS_{cc}$. For hybrid Bayesian networks, $\cS_{cd}$ and $\cS_{dc}$, it is not surprising that RAG outperforms DISC since RAG exhibits no information loss. Under the $\cS_{dd}$ scenario containing data with two discrete variables, it is anti-intuitive at the first glance that RAG is better than DISC. The explanation is that $P(\cG_1 \mid \bX) \rightarrow \infty$ using BGe score as $\beta \rightarrow 1$, while BDe score has an upper limit. This is consistent with our theoretical results shown in Theorem \ref{thm:ratio.sdd}. Theorem \ref{thm:ratio.sdd} indicates that $\tilde{r}_{10}$ has an upper bound, while $r_{10}$ does not.

In summary, all the four scenarios imply that in a hybrid Bayesian network, RAG is a better choice compared to the discretization strategy.

\subsubsection{Results of BN with 4 Nodes}
\begin{table}[tb]
\centering
\resizebox{0.7\textwidth}{!}{%
\begin{tabular}{rrrrrrrrr}\toprule 

& \multicolumn{1}{r}{\textbf{Strategy}} & 
\multicolumn{1}{r}{\textbf{Score}} & 
\multicolumn{1}{r}{\textbf{SHD}} & 
\multicolumn{1}{r}{\textbf{TP}} &
\multicolumn{1}{r}{\textbf{FP}} &
\multicolumn{1}{r}{\textbf{FN}} &
\multicolumn{1}{r}{\textbf{TPR}} &
\multicolumn{1}{r}{\textbf{FR}} \\[0.5ex]\midrule

\parbox[t]{5mm}{\multirow{3}{*}{$\cS_{cc}$}} &  RAG &  BGe &  \textbf{0.03} &  \textbf{4.00} &  \textbf{0.02} &  \textbf{0.00} &  \textbf{1.00} &  \textbf{1.33}\\


    
& DISC-2 & BDe & 5.28 & 2.10 & 1.70 & 1.90 & 0.53 & 0.02\\

& DISC-4 & BDe & 4.72 & 2.00 & 1.01 & 2.00 & 0.50 & 0.00\\

\midrule

\parbox[t]{5mm}{\multirow{3}{*}{$\cS_{cd}$}} &  RAG &  BGe &  \textbf{0.57} &  \textbf{3.92} &  \textbf{0.26} &  \textbf{0.08} &  \textbf{0.99} &  \textbf{0.58}\\


    
& DISC-2 & BDe & 5.26 & 1.19 & 1.90 & 2.81 & 0.30 & 0.00\\

& DISC-4 & BDe & 5.47 & 1.27 & 1.97 & 2.73 & 0.32 & 0.00\\

\midrule

\parbox[t]{5mm}{\multirow{3}{*}{$\cS_{dc}$}} &  RAG &  BGe &  \textbf{0.21} &  \textbf{3.96} &  0.09 &  \textbf{0.05} &  \textbf{0.99} &  0.88\\


    
& DISC-2 & BDe & 0.80 & 3.54 & \textbf{0.05} & 0.46 & 0.89 & 0.74\\

& DISC-4 & BDe & 0.85 & 3.59 & 0.17 & 0.41 & 0.90 & \textbf{1.51}\\

\midrule

\parbox[t]{5mm}{\multirow{2}{*}{$\cS_{dd}$}} &  RAG &  BGe &  \textbf{0.93} & \textbf{3.73} &  0.05 & \textbf{0.27} &  \textbf{0.93} &  \textbf{0.13}\\

& - & BDe & 2.17 & 3.05 & \textbf{0.01} & 0.95 & 0.76 & 0.00\\

\bottomrule
\end{tabular}
}
\caption{The average metrics across 100 data replications for 4 nodes BN under $\cS_{cc}$, $\cS_{cd}$, $\cS_{dc}$ and $\cS_{dd}$.} \label{tab:4nodes.results} 
\end{table}
The results for 4-node BNs under the four data generation scenarios are summarized in Table \ref{tab:4nodes.results}. 
This table again confirms that the RAG is always the better choice when dealing with hybrid Bayesian networks, comparing to discretization strategy. The performance of DISC-2 against DISC-4 indicates that structure learning of hybrid Bayesian networks is not likely to be sensitive to the levels in discrete variables. It is noticeable that in some scenarios, such as $\cS_{dc}$ and $\cS_{dd}$, RAG is more likely to pick more false positives than DISC. One explanation to this observation is that RAG picks up slightly more edges than DISC. When there is a trade-off between controlling FP and FN, it is usually more preferable in practice to have higher FP than FN, since we can be more tolerant with false causal links than missing of any true causal link.
\section{Real Example Analysis}\label{sec:realdata}
Childhood obesity has been a serious public health issue worldwide \citep{chooi2019epidemiology}, affecting many people across their lifespans. To better understand the relationship of the variables surrounding childhood obesity for early interventions, we show an application of structure learning of Bayesian network to study this complex system for discovery of the multiple possible pathways that lead to childhood obesity.

The data used for analysis in this section is the `Growing Up in Australia: The \emph{Longitudinal Study of Australian Children}' (LSAC) \citep{mohal2021}. The LSAC is Australia's nationally representative children’s longitudinal study, focusing on social, economic, physical and cultural impacts on health, learning, social and cognitive development. The study tracks children’s development and life course trajectories in today’s economic, social and political environment. The aims of the data are to identify policy opportunities for improving support for children and their families as well as identify opportunities for early intervention.
\begin{table}[tb]
    \centering
    \resizebox{0.8\textwidth}{!}{%
    
    \begin{tabular}{llp{0.7\linewidth}} \toprule
        Name & Type  & Description  \\\midrule
         BMI & Continuous    & The z-score of BMI according to CDC growth charts. \\ 
         BM1 & Continuous    &BMI of Parent 1 (usually mother). \\ 
         BM2 & Continuous    &BMI of Parent 2 (usually father). \\ 
         INC & Continuous & Usual weekly income for household. \\ 
         SE  & Continuous   &The z-score for socioeconomic position among all families. \\
         BWZ & Continuous   & Birth weight Z-score. \\ 
         AC  & Discrete   &Child's choice (1 to 3) to spend free time. \\ 
         DP1 & Discrete   &Parent 1 depression K6 score. \\ 
         FH  & Discrete   & Household financial hardship score (0-6). \\ 
         FS  & Discrete  &Parent 1 financial stress (1-6). \\ 
         ME1 & Discrete   &Maternal high school education. \\ 
         ME2 & Discrete   & Maternal post-secondary education \\ 
         FE1 & Discrete   &Father's high school education. \\ 
         OD  & Discrete   & The quality of outdoor environment.\\ 
         RP1 & Discrete  & The scale of parent 1 feeling rushed.\\ 
         SL  & Discrete  & The study child sleep quality.\\ 
         SX  & Discrete  & Gender of child. \\ 
         
         GW & Discrete  & Gestation weeks. \\ 
         FV & Discrete  & Serves of fruit and vegetables per day. \\
         HF & Discrete  & Serves of high fat food (inc. whole milk) per day. \\ 
         HSD & Discrete & Serves of high sugar drinks per day. \\ 
         \bottomrule
    \end{tabular}
    }
    \caption{Description of variables that were used to construct the Bayesian network.}
    \label{Tab:variables}
\end{table}

Data from Wave 2 in Baby cohort is chosen to demonstrate the performance of different approaches. Data pre-processing was conducted to remove missing values and implausible entries, such as BMI values which are larger than 60, resulting in a data set of 2344 samples. Table \ref{Tab:variables} lists the variables and their descriptions used to construct the Bayesian networks, which contain both continuous and discrete random variables. Prior to implementing partition MCMC, it is necessary to forbid some links appearing in this application. For example, the link `Socio-Economic $\rightarrow$ SEX' does not make sense, because socio-economic can not drive the gender of a child. Table \ref{tab:blacklist} lists all forbidden links. 

\begin{table}[tb]
\centering
\resizebox{0.85\textwidth}{!}{%
\begin{tabular}{rcccccccccccccccccccc}
  \toprule
 & SX & BMI & SE & AC & INC & FS & FH & ME1 & FE1 & BM1 & BM2 & RP1 & DP1 & FV & HF & HSD & SL & OD & GW & BWZ \\ 
  \midrule
SX & & & & & & & & \xmark & \xmark & & & & & & & & & & \xmark & \xmark \\ 
BMI & \xmark & & \xmark & \xmark & & & & \xmark & \xmark & \xmark & \xmark & & & & & & & & \xmark & \xmark \\ 
SE & \xmark &  &  &  &  &  &  & \xmark & \xmark &  &  &  &  &  &  &  &  &  &  &  \\ 
AC & \xmark &  & \xmark &  &  &  &  & \xmark & \xmark &  &  &  &  &  &  &  &  &  & \xmark & \xmark \\ 
INC & \xmark &  &  &  &  &  &  & \xmark & \xmark &  &  &  &  &  &  &  &  &  &  &  \\ 
FS & \xmark &  &  &  &  &  &  & \xmark & \xmark &  &  &  &  &  &  &  &  &  & \xmark & \xmark \\ 
FH & \xmark &  &  &  &  &  &  & \xmark & \xmark &  &  &  &  &  &  &  &  &  & \xmark & \xmark \\ 
ME1 & \xmark &  &  &  &  &  &  &  &  &  &  &  &  &  &  &  &  &  &  &  \\ 
FE1 & \xmark &  &  &  &  &  &  &  &  &  &  &  &  &  &  &  &  &  &  &  \\ 
BM1 & \xmark &  &  &  &  &  &  &  &  &  &  &  &  &  &  &  &  &  &  &  \\ 
BM2 & \xmark &  &  &  &  &  &  &  &  &  &  &  &  &  &  &  &  &  &  &  \\ 
RP1 & \xmark &  &  &  &  &  &  & \xmark & \xmark &  &  &  &  &  &  &  &  &  & \xmark & \xmark \\ 
DP1 & \xmark &  &  &  &  &  &  & \xmark & \xmark &  &  &  &  &  &  &  &  &  & \xmark & \xmark \\ 
FV &  &  &  &  &  &  &  &  &  &  &  &  &  &  &  &  &  &  &  &  \\ 
HF &  &  &  &  &  &  &  &  &  &  &  &  &  &  &  &  &  &  &  &  \\ 
HSD &  &  &  &  &  &  &  &  &  &  &  &  &  &  &  &  &  &  &  &  \\ 
SL & \xmark &  & \xmark &  &  &  &  & \xmark & \xmark &  &  &  &  &  &  &  &  &  & \xmark & \xmark \\ 
OD & \xmark &  & \xmark &  &  &  &  & \xmark & \xmark &  &  &  &  &  &  &  &  &  & \xmark &  \\ 
GW & \xmark &  & \xmark &  &  &  &  & \xmark & \xmark & \xmark & \xmark &  &  &  &  &  &  &  &  &  \\ 
BWZ & \xmark &  & \xmark &  &  &  &  & \xmark & \xmark & \xmark & \xmark &  &  &  &  &  &  &  & \xmark &  \\ 
\bottomrule
\end{tabular}
} 
\caption{Forbidden links adopted in implementation of partition MCMC. The link always start from the row names to column names. The symbol \xmark \ indicates the link is forbidden.}
\label{tab:blacklist}
\end{table}

We implemented RAG and DISC respectively to the data using partition MCMC. The resulting most probable DAG structures produced under RAG and discretization strategy are shown in Figure \ref{fig:LSAC-RAG} and \ref{fig:LSAC-DISC} respectively. It is clear that the DAG produced by RAG strategy is much denser than that by DISC, indicating that DISC has missed many links which might have existed in reality. These missing edges are very likely to result in misunderstanding about the whole complex system. For example, according to the Figure \ref{fig:LSAC-RAG}, BMI is associated with BM1, BM2, and BWZ, while in Figure \ref{fig:LSAC-DISC} only BM1 (mother's BMI) is associated with BMI. Thus, misleading conclusions are likely to be concluded based on Figure \ref{fig:LSAC-DISC}. This real data analysis again confirms our findings in the simulation studies. That is, when dealing with hybrid Bayesian networks, RAG strategy is always preferred.

\begin{figure}[tb]
    \centering
    \begin{minipage}{0.45\textwidth}
        \centering
        \includegraphics[width=\textwidth]{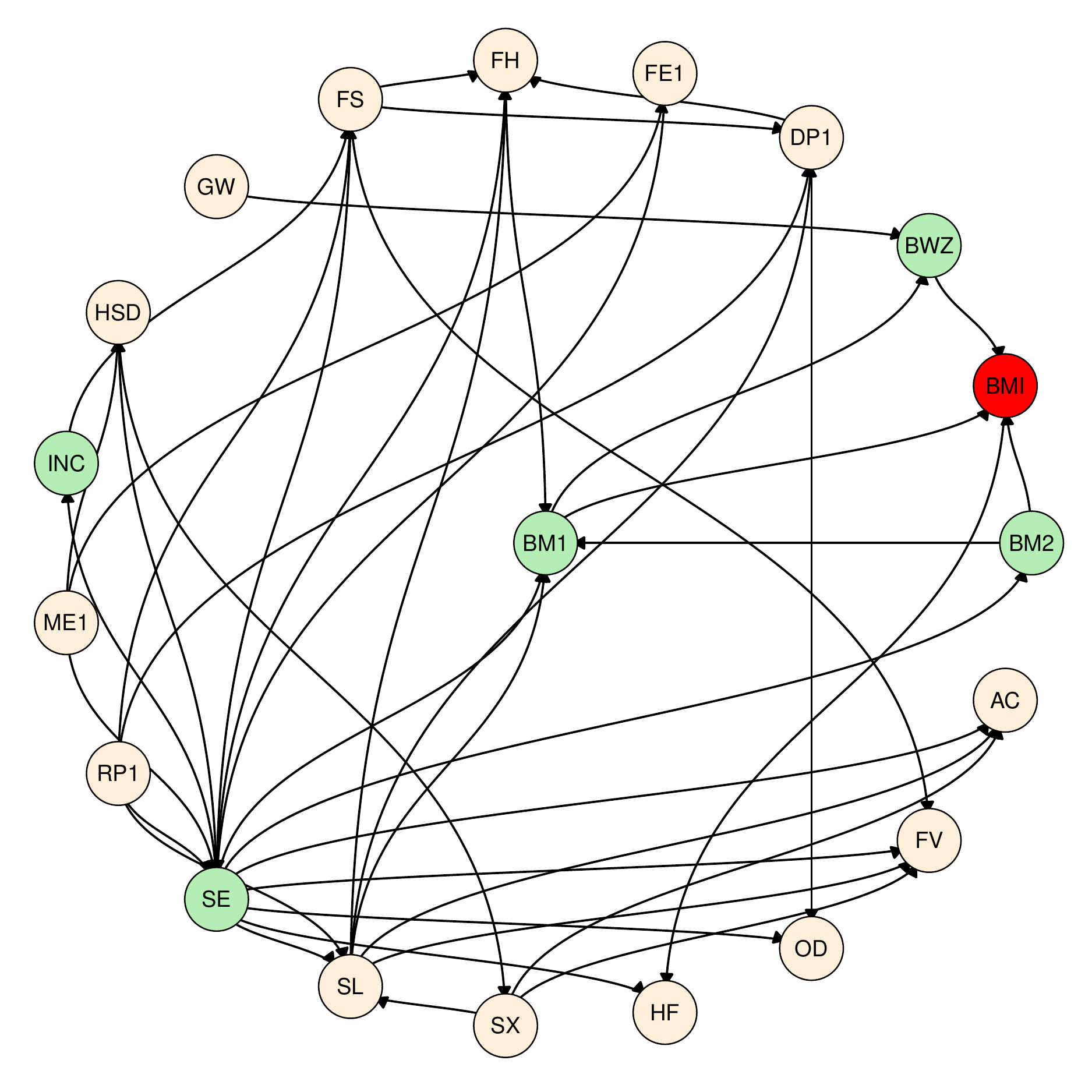} 
        \subcaption{RAG}\label{fig:LSAC-RAG}
    \end{minipage}\hfill
    \begin{minipage}{0.45\textwidth}
        \centering
        \includegraphics[width=\textwidth]{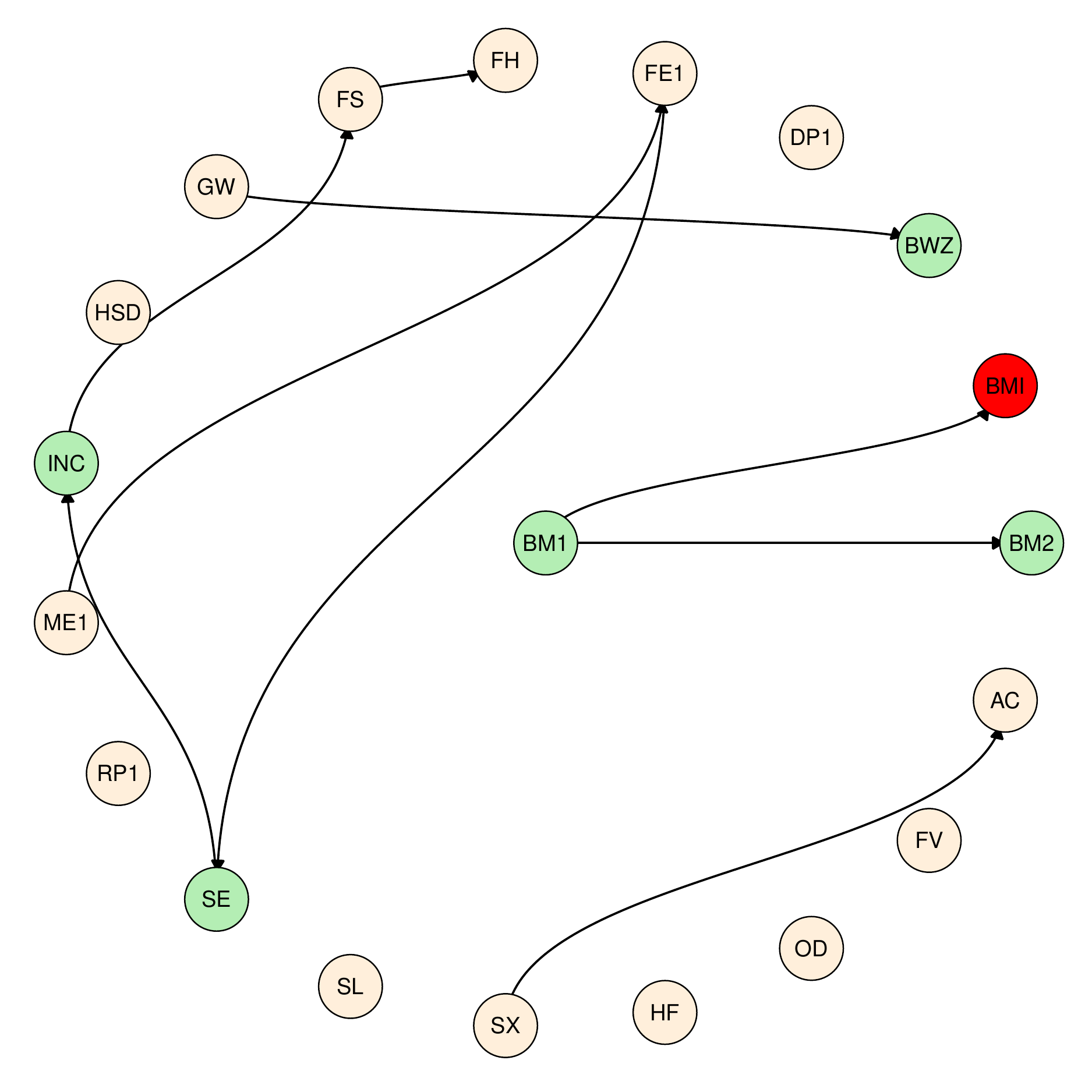} 
        \subcaption{DISC}\label{fig:LSAC-DISC}
    \end{minipage}
    \caption{The most likely DAGs under the two strategies. Continuous variables are represented by green nodes, while light yellow nodes depict discrete variables. The red node is the children's BMI.}
\end{figure}

To be complementary to the simulation studies, we also conduct further simulations based on this real data example. Two sets of simulations are conducted based on the DAG structures in Figure \ref{fig:LSAC-RAG} and \ref{fig:LSAC-DISC} respectively. In particular, assuming the DAG structure in Figure \ref{fig:LSAC-RAG} is the ground truth, we simulate 50 replicates of the data set, denoted by $\{\bX_{RAG}\}$. For each data set in $\{\bX_{RAG}\}$, we run partition MCMC with RAG and DISC respectively and compare their performance. Similarly, let $\{\bX_{DISC}\}$ denote the simulated 50 replicates of the data set based on the DAG in Figure \ref{fig:LSAC-DISC}. We run partition MCMC with RAG and DISC respectively and compare their performance. Note that each of the generated data set contains both discrete and continuous variables.

\begin{table}[tb]
\centering
\resizebox{0.7\textwidth}{!}{%
\begin{tabular}{ccccccccc}\toprule 

\multicolumn{1}{c}{\textbf{Data}}& \multicolumn{1}{c}{\textbf{Strategy}} & 
\multicolumn{1}{c}{\textbf{Score}} & 
\multicolumn{1}{c}{\textbf{SHD}} & 
\multicolumn{1}{c}{\textbf{TP}} &
\multicolumn{1}{c}{\textbf{FP}} &
\multicolumn{1}{c}{\textbf{FN}} &
\multicolumn{1}{c}{\textbf{TPR}} \\[0.5ex]\midrule

{\multirow{2}{*}{$\{\bX_{RAG}\}$}} &  RAG &  BGe & 3.84  & 35.98  & 0.08  & 1.02  & 0.97 \\
    
& DISC & BDe & 29.80 & 8.40 & 0.00 & 28.60 & 0.23 \\\midrule

{\multirow{2}{*}{$\{\bX_{DISC}\}$}} &  RAG &  BGe & 1.18  & 9.00  & 0.46  & 0.00  & 1.00  \\
    
& DISC & BDe & 0.54 & 9.00 & 0.00 & 0.00 & 1.00 \\

\bottomrule
\end{tabular}
}
\caption{The metrics of RAG and DISC under $\bX_{RAG}$ and $\bX_{DISC}$ respectively.} \label{tab:LSAC.simulation} 
\end{table}
We apply both RAG and DISC strategies to the generated data $\{\bX_{RAG}\}$ and $\{\bX_{DISC}\}$. Table \ref{tab:LSAC.simulation} summaries the results of the above simulations. The averaged numbers of false negatives under $\{\bX_{RAG}\}$ scenario indicates that RAG strategy significantly outperforms DISC, while DISC strategy has missed many links. Considering the data under $\{\bX_{DISC}\}$, RAG's performance is comparable with that of DISC, with marginally more false positives. As it is known, we can not simultaneously control FP and FN. Although the performance between the two strategies are similar in terms of controlling FP, RAG is significantly better in controlling FN. In addition, it can be argued that FP is more tolerable than FN in the structure learning problem, since all the true links should be detected. Therefore, RAG should be the preferred strategy over DISC in practice. 

\section{Discussion and Conclusion} \label{sec:conclusion}

We reviewed the literature on structure learning of hybrid BNs, in which either the relationship between discrete and continuous variables are directly modeled to develop a score function for searching or discretization strategies are employed to convert a hybrid Bayesian network into a well-studied discrete Bayesian network. 

Our proposed RAG strategy considers all discrete variables in a hybrid BN as continuous variables, thus converting hybrid networks into continuous Bayesian networks that can be handled by many existing techniques. Although RAG mis-specifies the distributions of the discrete variables, this has minor impact on structure learning, and the benefits of RAG are significant. Compared to the methods discussed in Section~\ref{sec:intro}, RAG saves significant computational resources. RAG also has no information loss when compared to discretization techniques and we do not need to worry about implementing discretization. We have shown theoretically and via simulation the advantages of RAG compared to discretization strategy under a score-based structure learning approach. Through the theoretical and empirical analysis, we conclude that RAG is the preferable choice when dealing with hybrid Bayesian networks, comparing with discretization techniques.

Throughout the paper, we assume the discrete random variables are ordinal random variables. In practice, the discrete variables could be nominal variables. For example, a variable representing colors can take values from a set \{red, blue, yellow\}. For such circumstances, there is no standard way to convert nominal variables to a numeric variable. One solution is to convert the nominal variable to dummy variables, which will be treated as continuous. A $q$-level nominal variable will be replaced by $q-1$ dummy variables. Although this comes at the cost of increasing the number of nodes in the resulting Bayesian networks and thus requiring more computational resources, this solution makes RAG feasible to handle nominal variables.

When using the DISC strategy, a continuous random variable is always discretized to a binary random variable throughout this paper. The motivation is largely to keep our theoretical results succinct. Otherwise, it becomes too tedious to derive the theoretical results, as it requires to write out all the probabilities $P(X'_1 =i, X'_2 = j)$. And the number of probabilities grows quadratically with respect to the number of levels in the discretized random variables.

\bibliographystyle{ba}
\bibliography{main}

\newpage
\section*{Supplementary Materials for ``Structure Learning for Hybrid Bayesian Networks''}
\appendix

\section{Proof of Lemma \ref{lemma:2nodes}} \label{proof:lemma.2nodes}

\begin{proof}

For any graph, it follows,
\begin{eqnarray} \label{eqn:decomposable}
P(\cG\mid \bX) \propto P(\bX \mid \cG) = \prod_{i=1}^n P(X_i \mid \textbf{Pa}_i^{\cG}) =\left(\prod_{i\in O_\cG} P(X_i) \right) \left(\prod_{i \notin O_\cG} P(X_i \mid \textbf{Pa}_i^{\cG}) \right),
\end{eqnarray}
where $O_\cG$ denotes the set of outpoints in $\cG$.
As $\#(\textbf{Pa}_i^{\cG}) >0$ for $i \notin O_\cG$,  without loss generality, we assume there are $m_i$ elements in $\textbf{Pa}_i^{\cG}$ and denoted by $\textbf{Pa}_i^{\cG} = \{Z_{i,1}, \cdots, Z_{i,m_i}\}$. In this case, we have
\begin{eqnarray} \label{eqn:seqprod}
    P(X_i \mid \textbf{Pa}_i^{\cG}) = P(X_i \mid Z_{i,1},\cdots, Z_{i,m_i}) = P(Y_{i,1} \mid Z_{i,1}),
\end{eqnarray}
where $Y_{i,1} = (X_i \mid Z_{i,2},\cdots,Z_{i,m_i})$ denotes the residual of $X_i$ eliminating the impacts from $Z_{i,2},\cdots,Z_{i,m_i}$. Recursively, we have 
\begin{eqnarray}
Y_{i,t} = (Y_{i,t+1} \mid Z_{i,t+1}), \quad \quad t= 1,\cdots, m_i-1, 
\end{eqnarray}
where $Y_{i,t}= (X_i \mid Z_{i,t+1},\cdots, Z_{i,m_i})$. Thus, Equation \ref{eqn:seqprod} can be achieved by conducting regressions starting from $t=m_i - 1$ to $1$, each of which is a simple linear regression.

Considering, the terms in $\prod_{i \notin O_\cG} P(X_i \mid \textbf{Pa}_i^{\cG}) $ are mutually independent, $P(\cG\mid \bX)$ can be derived by conducting a series of simple linear regressions. Actually, the number of simple linear regressions required in computation of $P(\cG\mid \bX)$ is equal to the number of edges in $\cG$.

\end{proof}

\section{Proof of Lemma~\ref{lemma:BGe.posterior.ratio}}
\begin{proof} \label{proof:bge.posteriorratio}


We have set a joint prior distribution on $\bW$ and $\bmu$, which are shown in Equation \eqref{eqn:prior.W} and \eqref{eqn:prior.mu}.
The marginal distribution of the data $\bX$ is computed as,
\begin{eqnarray*}
P(\bX)&=& \int P(\bX \mid \bmu, \bW) P(\bmu,\bW) d\bmu d \bW = \left(\frac{\alpha_\mu}{N+\alpha_\mu}\right)^{\frac{n}{2}} \frac{\Gamma_n(\frac{N+\alpha_w}{2})}{\pi^{\frac{nN}{2}} \Gamma_n(\frac{\alpha_w}{2})} \frac{\mid \bT\mid ^{\frac{\alpha_w}{2}}}{\mid \bR\mid ^{\frac{N+\alpha_w}{2}}},
\end{eqnarray*}
where $\bR=\bT+\bS_N+\frac{N\alpha_\mu}{N+\alpha_\mu} (\boldsymbol{\bar{x}} - \bnu)(\boldsymbol{\bar{x}} - \bnu)^T$, $\boldsymbol{\bar{x}} = \frac{1}{N} \sum_{i=1}^N \boldsymbol{x}_{i}$, $\bS_N= \sum_{i=1}^N (\bx_i - \boldsymbol{\bar{x}})(\bx_i - \boldsymbol{\bar{x}})^T = N \Sigma$, $\Sigma$ is the covariance matrix of $(X_1, X_2)$ and $\bx_i = (x_{1i},x_{2i})^T$.

Let $\bX^Y$ denote the subset of $\bX$. For example, if $Y=\{X_1\}$, then $\bX^Y$ denotes the column $X_1$ in data $\bX$.
Similarly, the marginal distribution of a subset of the data $\bX$ is computed as,
\begin{eqnarray*}
P(\bX^Y)= \left(\frac{\alpha_\mu}{N+\alpha_\mu}\right)^{\frac{l}{2}} \frac{\Gamma_l(\frac{N+\alpha_w-n+l}{2})}{\pi^{\frac{lN}{2}} \Gamma_l(\frac{\alpha_w-n+l}{2})} \frac{\mid \bT_{YY}\mid ^{\frac{\alpha_w-n+l}{2}}}{\mid \bR_{YY}\mid ^{\frac{N+\alpha_w-n+l}{2}}},
\end{eqnarray*}
where $l$ is the number of variables in $Y$, $\bT_{YY}$ and $\bR_{YY}$ denote the sub-matrices of $\bT$ and $\bR$ respectively, corresponding to the subset $Y$.

The logarithm of posterior ratio between $P(\cG_1 \mid \bX )$ and $P(\cG_0 \mid \bX )$ is
\begin{eqnarray}
&&\log\left(\frac{P(\cG_1 \mid \bX )}{P(\cG_0 \mid \bX )}\right) = \log\left(\frac{P( \bX \mid \cG_1 ) P(\cG_1)}{P( \bX \mid \cG_0 ) P(\cG_0)} \right)= \log\left(\frac{P( \bX \mid \cG_1 ) }{P( \bX \mid \cG_0 ) }\right) \notag\\
&&= \log\left(\frac{P( \bX ) }{P( \bX^{X_1}) P( \bX^{X_2})}\right) = \log \left( \frac{\Gamma(\frac{N+\alpha_w}{2}) \Gamma(\frac{\alpha_w-1}{2})}{ \Gamma(\frac{\alpha_w}{2}) \Gamma(\frac{N+\alpha_w-1}{2})} 
\frac{\mid \bT\mid ^{\frac{\alpha_w}{2}}}{\mid \bR\mid ^{\frac{N+\alpha_w}{2}}} 
\frac{( \bR_{11} \bR_{22} ) ^{\frac{N+\alpha_w-1}{2}}} {( \bT_{11} \bT_{22}) ^{\frac{\alpha_w-1}{2}}} \right) \notag \\
&&=C -\frac{\alpha_w +N}{2}\log(|\bR|) +\frac{N-1+\alpha_w }{2} \log(\bR_{11} \bR_{22}), \notag
\end{eqnarray}
where $C=\log\left(\frac{\Gamma(\frac{\alpha_w+N}{2}) \Gamma(\frac{\alpha_w-1}{2}) }{\Gamma(\frac{\alpha_w+N-1}{2}) \Gamma(\frac{\alpha_w}{2})}\right) +\log(t)$, $\bR=\bT+\bS_N+\frac{N\alpha_\mu}{N+\alpha_\mu} (\boldsymbol{\bar{x}} - \bnu)(\boldsymbol{\bar{x}} - \bnu)^T$, and $\bR_{ij}$ denotes the element at the  $i^{th}$ row and $j^{th}$ column. 

For a large $N$, the matrix $\bR$ is dominated by $\bS_N$ and $C$ converges to a constant. Thus, we have 
\begin{eqnarray}
&&\lim_{N\rightarrow \infty}\frac{1}{N} \log\left(\frac{P(\cG_1 \mid \bX )}{P(\cG_0 \mid \bX )}\right) \notag \\
&=& \lim_{N\rightarrow \infty}\frac{1}{N} \left(C -\frac{\alpha_w +N}{2}\log(|\bR|) +\frac{N-1+\alpha_w }{2} \log(\bR_{11} \bR_{22}) \right) \notag \\
&=&  \lim_{N\rightarrow \infty} \left( -\frac{1}{2}\log(|\bR|) +\frac{1 }{2} \log(\bR_{11} \bR_{22}) \right) \notag \\
&=& \frac{1}{2} \log\left( \frac{\Sigma_{11} \Sigma_{22}}{\Sigma_{11} \Sigma_{22} - \Sigma_{12}^2} \right),\label{eqn:log.ratio}
\end{eqnarray}
where $\Sigma_{ij}$ is the element of covariance matrix $\Sigma$. Equation \eqref{eqn:log.ratio} indicates that as sample size $N$ goes to infinity the average posterior ratio is determined by covariance matrix $\Sigma$.

Substituting Equation \eqref{eqn:log.ratio} into Equation \eqref{eqn:evaluation.criterion}, we have
\begin{eqnarray*}
\lim_{N \rightarrow \infty} r_{10} &=& \lim_{N \rightarrow \infty} \frac{1}{N} \int_{\bx \in \Omega_{\bX}} \log\left(\frac{P(\cG_1 \mid \bX )}{P(\cG_0 \mid \bX )}\right) P(\bX) d\bx\\
&=&\frac{1}{2} \log\left( \frac{\Sigma_{11} \Sigma_{22}}{\Sigma_{11} \Sigma_{22} - \Sigma_{12}^2} \right) \int_{\bx \in \Omega_{\bX}} P(\bX) d\bx\\
&=&\frac{1}{2} \log\left( \frac{\Sigma_{11} \Sigma_{22}}{\Sigma_{11} \Sigma_{22} - \Sigma_{12}^2} \right).
\end{eqnarray*}

The proof completes.
\end{proof}

\section{Proof of Lemma~\ref{lemma:BDe.posterior.ratio} }
\begin{proof} \label{proof:bde.posteriorratio}
We have set a joint prior distribution on $\theta_{12}, \theta_{1 \bar{2}}, \theta_{\bar{1}2}$, which is shown in Equation \eqref{eqn:prior.BDe}. To compute the marginal distribution of $\bX$, we need to derive the prior distributions on the parameters attached to $\cG_1$ and $\cG_0$ respectively.

For graph $\cG_1$, i.e., graph $X_1 \rightarrow X_2$, the parameters will be $\theta_1, \theta_{2\mid 1},\theta_{2\mid \bar{1}}$, where $\theta_1 = P(X_1 =1), \theta_{2\mid 1} = P(X_2 =1 \mid X_1 =1)$ and $\theta_{2\mid \bar{1}} = P(X_2 =1 \mid X_1 =0)$. Then the prior distribution of $\theta_1, \theta_{2\mid 1},\theta_{2\mid \bar{1}}$ will be 
\begin{eqnarray*}
P(\theta_1, \theta_{2\mid 1},\theta_{2\mid \bar{1}} \mid \cG_1, \alpha_1,\cdots,\alpha_4) = J_{\cG_1} P(\theta_{12},\theta_{1\bar{2}},\theta_{\bar{1}2} \mid \alpha_1,\cdots,\alpha_4),
\end{eqnarray*}
where $J_{\cG_1}$ is the Jacobian matrix between the transformation from $\theta_{12},\theta_{1\bar{2}},\theta_{\bar{1}2}$ to $\theta_1, \theta_{2\mid 1},\theta_{2\mid \bar{1}}$. By definition, $J_{\cG_1} = \theta_1 (1-\theta_1).$
Then,
\begin{eqnarray}
&&P(\theta_1, \theta_{2\mid 1},\theta_{2\mid \bar{1}} \mid \cG_1, \alpha_1,\cdots,\alpha_4) \notag \\
&&= J_{\cG_1} P(\theta_{12},\theta_{1\bar{2}},\theta_{\bar{1}2} \mid \alpha_1,\cdots,\alpha_4) \notag \\
&&=\theta_1 (1-\theta_1) \frac{1}{\boldsymbol{B}(\balpha)} (\theta_{1}\theta_{2\mid 1})^{\alpha_1 - 1} (\theta_{1} (1-\theta_{2\mid 1}))^{\alpha_2 -1 } ((1-\theta_{1})\theta_{2\mid \bar{1}})^{\alpha_3 -1} ((1-\theta_{1})(1-\theta_{2\mid \bar{1}}))^{\alpha_4 -1} \notag\\
&&= \frac{1}{\boldsymbol{B}(\balpha)} \theta_1^{\alpha_1+ \alpha_2 -1} (1-\theta_1)^{\alpha_3+\alpha_4-1} \theta_{2\mid {1}}^{\alpha_1 -1} (1-\theta_{2\mid {1}})^{\alpha_2 -1} \theta_{2\mid \bar{1}}^{\alpha_3 -1} (1-\theta_{2\mid \bar{1}})^{\alpha_4 -1}, \label{eqn:prior.g12}
\end{eqnarray}
where $\balpha = (\alpha_1, \alpha_2, \alpha_3, \alpha_4)$, $\boldsymbol{B}(\balpha)$ is a multivariate beta function, i.e., $\boldsymbol{B}(\balpha) = \frac{\prod_{i=1}^4 \Gamma(\alpha_i)}{\Gamma(\sum_{i=1}^4 \alpha_i)}$. For a general vector $\boldsymbol{v} = (v_1, \cdots, v_K)$, $\boldsymbol{B}(\boldsymbol{v}) = \frac{\prod_{i=1}^K \Gamma({v}_i)}{\Gamma(\sum_{i=1}^K {v}_i)}$. When $K=2$, the multivariate beta function is reduced to beta function.

Similarly, we can derive the prior distribution for graph $\cG_{2}: X_2 \rightarrow X_1$, which is an equivalence class of $\cG_1$, as follows
\begin{eqnarray}
&&P(\theta_2, \theta_{1\mid 2},\theta_{1\mid \bar{2}} \mid \cG_{2}, \alpha_1,\cdots,\alpha_4) \notag \\
&&= \frac{1}{\boldsymbol{B}(\balpha)} \theta_2^{\alpha_1+ \alpha_2 -1} (1-\theta_2)^{\alpha_3+\alpha_4-1} \theta_{1\mid {2}}^{\alpha_1 -1} (1-\theta_{1\mid {2}})^{\alpha_2 -1} \theta_{1\mid \bar{2}}^{\alpha_3 -1} (1-\theta_{1\mid \bar{2}})^{\alpha_4 -1}. 
\end{eqnarray}
In graph $\cG_0$, there are only two parameters $\theta_1$ and $\theta_2$. To derive the prior distribution $P(\theta_1, \theta_2 \mid \cG_0,  \alpha_1, \cdots, \alpha_4)$, we need the following two assumptions: global parameter independence and parameter modularity \citep{heckerman1995}. According to global parameter independence, we have
\begin{eqnarray}
P(\theta_1, \theta_2 \mid \cG_0, \alpha_1, \cdots, \alpha_4) = P(\theta_1 \mid \cG_0, \alpha_1, \cdots, \alpha_4) P(\theta_2 \mid \cG_0, \alpha_1, \cdots, \alpha_4). \label{eqn:parameter.independence}
\end{eqnarray}
According to parameter modularity, we have 
\begin{eqnarray}
P(\theta_1 \mid \cG_0, \alpha_1, \cdots, \alpha_4) &=& P(\theta_1 \mid \cG_1, \alpha_1, \cdots, \alpha_4), \label{eqn:modularity1} \\ 
P(\theta_2 \mid \cG_0, \alpha_1, \cdots, \alpha_4) &=& P(\theta_2 \mid \cG_{2}, \alpha_1, \cdots, \alpha_4). \label{eqn:modularity2}
\end{eqnarray}

Combining Equations \eqref{eqn:prior.g12}-\eqref{eqn:modularity2} together and computing the normalizing constant to make a proper density function,
we have
\begin{eqnarray}
&&P(\theta_1, \theta_{2} \mid \cG_0, \alpha_1,\cdots,\alpha_4) \nonumber\\
&&=\left[\frac{1}{\boldsymbol{B}(\alpha_1+\alpha_2,\alpha_3+\alpha_4)}\right]^2  \theta_1^{\alpha_1+ \alpha_2 -1} (1-\theta_1)^{\alpha_3+\alpha_4-1} \theta_2^{\alpha_1+ \alpha_2 -1} (1-\theta_2)^{\alpha_3+\alpha_4-1}. \label{eqn:prior.g0}
\end{eqnarray}

Once we have the prior distribution for $\cG_1$ (Equation \eqref{eqn:prior.g12}), integrating out all the parameters out, we have
\begin{eqnarray}
&&P(\bX \mid \cG_1) \notag \\
&=& \int P(\bX \mid \cG_1,\theta_1,\theta_{2\mid 1},\theta_{2\mid \bar{1}}) P(\theta_1, \theta_{2\mid 1},\theta_{2\mid \bar{1}} \mid \cG_1, \alpha_1,\cdots,\alpha_4) d\theta_1 d\theta_{2\mid 1} d \theta_{2\mid \bar{1}}\notag \\
&=&\frac{\boldsymbol{B}(\alpha_{12}+N_{1.}, \alpha_{34}+N_{0.}) \boldsymbol{B}(\alpha_1+N_{11}, \alpha_2 + N_{10}) \boldsymbol{B}(\alpha_3+ N_{01}, \alpha_4 + N_{00}) }{\boldsymbol{B}(\balpha)},\notag \\
&=& \frac{\Gamma(\alpha_1+N_{11}) \Gamma(\alpha_2 + N_{10}) \Gamma(\alpha_3+ N_{01})  \Gamma( \alpha_4 + N_{00}) }{\boldsymbol{B}(\balpha) \Gamma(\alpha_{12}+\alpha_{34}+N)}, \label{eqn:BDe.g1.likelihood}
\end{eqnarray}
where $N_{11}$ denotes the number of samples in $\bX$ equal to $(1,1)$, $N_{10}$ denotes the number of samples in $\bX$ equal to $(1,0)$, $N_{01}$ denotes the number of samples in $\bX$ equal to $(0,1)$, $N_{00}$ denotes the number of samples in $\bX$ equal to $(0,0)$, $N_{1.} = N_{11}+N_{10}$, $N_{0.} = N_{01}+N_{00}$, $\alpha_{12} = \alpha_1 + \alpha_2$ and $\alpha_{34} = \alpha_3 + \alpha_4$.

Similarly, for graph $\cG_0$, whose prior distribution on $\theta_1, \theta_2$ is shown by Equation \eqref{eqn:prior.g0}, integrating out all the parameters out, we have
\begin{eqnarray}
P(\bX \mid \cG_0) &=& \int P(\bX \mid \cG_0,\theta_1,\theta_{2}) P(\theta_1, \theta_{2} \mid \cG_0, \alpha_1,\cdots,\alpha_4) d\theta_1 d \theta_2\notag\\
&=& \left[\frac{1}{\boldsymbol{B}(\alpha_{12},\alpha_{34})}\right]^2  \int \theta_1^{\alpha_{12}+N_{1.} -1} (1-\theta_1)^{\alpha_{34} + N_{0.} -1} \theta_2^{\alpha_{12}+ N_{.1} -1} (1-\theta_2)^{\alpha_{34}+N_{.0}-1}d\theta_1 d \theta_2 \notag\\
&=& \left[\frac{1}{\boldsymbol{B}(\alpha_{12},\alpha_{34})}\right]^2 \boldsymbol{B}(\alpha_{12}+N_{1.},\alpha_{34} + N_{0.}) \boldsymbol{B}(\alpha_{12}+N_{.1},\alpha_{34} + N_{.0}) \notag\\
&=& \left[\frac{1}{\boldsymbol{B}(\alpha_{12},\alpha_{34})}\right]^2 \frac{\Gamma(\alpha_{12}+N_{1.}) \Gamma(\alpha_{34} + N_{0.})}{\Gamma(\alpha_{12}+\alpha_{34}+N)} \frac{\Gamma(\alpha_{12}+N_{.1}) \Gamma(\alpha_{34} + N_{.0})}{\Gamma(\alpha_{12}+\alpha_{34}+N)},\label{eqn:BDe.g0.likelihood}
\end{eqnarray}
where $N_{.1} = N_{01}+ N_{11}$, $N_{.0} = N_{00}+ N_{10}$.
Combining Equation \eqref{eqn:BDe.g1.likelihood} and \eqref{eqn:BDe.g0.likelihood}, we have 
\begin{eqnarray}
&&\lim_{N \rightarrow \infty}\frac{1}{N}\log\left(\frac{P(\cG_1 \mid \bX  )}{P(\cG_0 \mid \bX )}\right)=\lim_{N \rightarrow \infty}\frac{1}{N}\log\left(\frac{P(  \bX \mid \cG_1  )}{P( \bX \mid \cG_0 )}\right) \notag \\
&=&\lim_{N \rightarrow \infty} \frac{1}{N} \log \left( \frac{\left[\boldsymbol{B}(\alpha_{12},\alpha_{34})\right]^2 \Gamma(\alpha_{12}+\alpha_{34}+N) \Gamma(\alpha_1+N_{11}) \Gamma(\alpha_2 + N_{10}) \Gamma(\alpha_3+ N_{01})  \Gamma( \alpha_4 + N_{00}) }{\boldsymbol{B}(\balpha) \Gamma(\alpha_{12}+N_{1.}) \Gamma(\alpha_{34} + N_{0.}) \Gamma(\alpha_{12}+N_{.1}) \Gamma(\alpha_{34} + N_{.0})}\right)\notag \\
&=& \lim_{N \rightarrow \infty} \frac{1}{N} \log \left( \frac{ \Gamma(\alpha_{12}+\alpha_{34}+N) \Gamma(\alpha_1+N_{11}) \Gamma(\alpha_2 + N_{10}) \Gamma(\alpha_3+ N_{01})  \Gamma( \alpha_4 + N_{00}) }{ \Gamma(\alpha_{12}+N_{1.}) \Gamma(\alpha_{34} + N_{0.}) \Gamma(\alpha_{12}+N_{.1}) \Gamma(\alpha_{34} + N_{.0})}\right) \label{eqn:bde.ratio.p1}.
\end{eqnarray}
According to the Stirling's approximation \citep{Stirling2010}, for a large integer value $z$,
\begin{eqnarray} \label{eqn:bde.ratio.p2}
&&\lim_{N\rightarrow \infty} \frac{1}{N}{\log(\Gamma(z+1))} \notag \\ &=&\lim_{N\rightarrow \infty} \frac{1}{N}\left( z\log(z) - z +\frac{1}{2} \log(2\pi z) + \mathcal{O}\left(\frac{1}{z}\right) \right) \notag\\
&=& \lim_{N\rightarrow \infty} \frac{1}{N}\left( z\log(z) - z \right).
\end{eqnarray}

Combining Equation \eqref{eqn:bde.ratio.p1} and \eqref{eqn:bde.ratio.p2}, we have
\begin{eqnarray}
&&\lim_{N \rightarrow \infty}\frac{1}{N}\log\left(\frac{P(\cG_1 \mid \bX  )}{P(\cG_0 \mid \bX )}\right) \notag\\
&=&\lim_{N \rightarrow \infty}  \log(\alpha_{12}+\alpha_{34}+N-1)+p_{11} \log(\alpha_1+N_{11}-1) + p_{10} \log(\alpha_2+N_{10}-1) \notag\\
&& + p_{01} \log(\alpha_3+N_{01}-1) +p_{00} \log(\alpha_4+N_{00}-1) -p_{1.}\log(\alpha_{12}+N_{1.}-1) \notag \\
&& - p_{0.}\log(\alpha_{34}+N_{0.}-1) -p_{.1}\log(\alpha_{12}+N_{.1}-1) -p_{.0}\log(\alpha_{34}+N_{.0}-1) \notag \\
&=& \lim_{N \rightarrow \infty} p_{11} \log\left(\frac{\alpha_1 +N_{11}-1}{\alpha_{12} + N_{1.}-1}\right) + p_{10} \log\left(\frac{\alpha_2 +N_{10}-1}{\alpha_{12} + N_{1.}-1}\right) + p_{01} \log\left(\frac{\alpha_3 +N_{01}-1}{\alpha_{34} + N_{0.}-1}\right) \notag \\
&&+ p_{00} \log\left(\frac{\alpha_4 +N_{00}-1}{\alpha_{34} + N_{0.}-1}\right) +  p_{.1} \log\left(\frac{\alpha_{12}+\alpha_{34} +N-1}{\alpha_{12} + N_{.1}-1}\right)+ p_{.0} \log\left(\frac{\alpha_{12}+\alpha_{34} +N-1}{\alpha_{34} + N_{.0}-1}\right) \notag \\
&=& p_{11} \log\left(\frac{p_{11}}{p_{1.}}\right) + p_{10} \log\left(\frac{p_{10}}{  p_{1.}}\right) + p_{01} \log\left(\frac{ p_{01}}{ p_{0.}}\right) + p_{00} \log\left(\frac{ p_{00}}{p_{0.}}\right) +  p_{.1} \log\left(\frac{ 1}{p_{.1}}\right)+ p_{.0} \log\left(\frac{1}{p_{.0}}\right) \notag \\
&=& p_{11} \log\left(\frac{p_{11}}{p_{1.} p_{.1}}\right) + p_{10} \log\left(\frac{p_{10}}{  p_{1.} p_{.0} }\right) + p_{01} \log\left(\frac{ p_{01}}{ p_{0.} p_{.1}}\right) + p_{00} \log\left(\frac{ p_{00}}{p_{0.} p_{.0}}\right), \notag
\end{eqnarray}
where $p_{1.} = \frac{N_{11}+ N_{10}}{N} = p_{11}+p_{10}, p_{0.} = \frac{N_{01}+ N_{00}}{N} = p_{01}+p_{00}, p_{.1} = \frac{N_{11}+ N_{01}}{N} = p_{11}+p_{01}$ and $p_{.0} = \frac{N_{10}+ N_{00}}{N} = p_{10}+p_{00}$.
Therefore,
\begin{eqnarray*}
&&\lim_{N \rightarrow \infty}r_{10} = \lim_{N \rightarrow \infty} \sum_{\bx \in \Omega_{\bX}}  \frac{1}{N}\log\left(\frac{P(\cG_1 \mid \bX  )}{P(\cG_0 \mid \bX )}\right) P(\bX = \bx)\\
&=& p_{11} \log\left(\frac{p_{11}}{p_{1.} p_{.1}}\right) + p_{10} \log\left(\frac{p_{10}}{  p_{1.} p_{.0} }\right) + p_{01} \log\left(\frac{ p_{01}}{ p_{0.} p_{.1}}\right) + p_{00} \log\left(\frac{ p_{00}}{p_{0.} p_{.0}}\right).
\end{eqnarray*}
The proof completes.
\end{proof}

\section{Proof of Theorem~\ref{thm:ratio.scc}}
\begin{proof} \label{proof:ratio.scc}
As the data is generated under scenario $\cS_{cc}$, the covariance matrix can be computed as 
$\Sigma = \left[\begin{array}{cc}
    \sigma_{1}^2 &\beta \sigma_{1}^2  \\
    \beta \sigma_{1}^2  & \sigma_{2}^2 + \beta^2 \sigma_{1}^2
\end{array}\right]$.
Substituting the covariance matrix to Lemma \ref{lemma:BGe.posterior.ratio}, immediately we have
\begin{eqnarray*}
\lim_{N\rightarrow \infty} r_{10} =\frac{1}{2} \log\left( \frac{\Sigma_{11}\Sigma_{22}}{\Sigma_{11}\Sigma_{22} - \Sigma_{12}^2} \right) = \frac{1}{2} \log\left( \frac{\sigma_2^2 + \beta^2 \sigma_1^2}{\sigma_2^2} \right).
\end{eqnarray*}

For the transformed data $\bX'$, define  $\tilde{p}_{ij} \triangleq P(X_1^{'} = i, X_2^{'} =j), i \in \{0,1\},j \in \{0,1\}$.
Then we have
\begin{eqnarray}
&& \tilde{p}_{11} =P(X_2^{'}=1, X_1^{'}=1)  =P(X_2 \geq \tilde{\mu}_2 , X_1 \geq \mu_1) \nonumber\\
&&=\int_{\mu_1}^\infty \int_{\tilde{\mu}_2}^{\infty} \frac{1}{\sqrt{2\pi }\sigma_1  } \exp\left\{- \frac{(x_1-\mu_1)^2}{2 \sigma_1^2} \right\} \frac{1}{\sqrt{2\pi }\sigma_2  } \exp\left\{- \frac{(x_2-\mu_2-\beta x_1)^2}{2 \sigma_2^2} \right\} dx_1 dx_2 \nonumber\\
&& = \int_{\mu_1}^\infty  \frac{1}{\sqrt{2\pi }\sigma_1  } \exp\left\{- \frac{(x_1-\mu_1)^2}{2 \sigma_1^2} \right\}  \Phi \left(- \frac{(\beta \mu_1-\beta x_1)}{ \sigma_2} \right) dx_1 \notag \\
&&= \int_{0}^{\infty} \frac{1}{\sqrt{2\pi } } \exp\left\{- \frac{x^2}{2 } \right\} \Phi\left(\frac{\sigma_1\beta x}{\sigma_2}\right)dx \nonumber\\
&&= \int_{0}^{\infty} \frac{1}{\sqrt{2\pi } } \exp\left\{- \frac{x^2}{2 } \right\} \Phi\left(\frac{\sigma_1\beta x}{\sigma_2}\right)dx. \label{eqn:p11.scc}
\end{eqnarray}
Since the data $\bX'$ is obtained by discretizing the original continuous random variables into binary variables using the middle point rule, it is straightforward to conclude
\begin{eqnarray} 
P(X'_1 = 1) = P(X'_1 = 0) = P(X'_2 = 1) = P(X'_2 = 0) = \frac{1}{2}. \label{eqn:marginals}
\end{eqnarray}

Combining the Equation \eqref{eqn:marginals} with \eqref{eqn:p11.scc}, we have
\begin{eqnarray}
&&\tilde{p}_{10} = P(X_2^{'}=0, X_1^{'}=1) =P(X^{'}_1 =1)-P(X^{'}_2=1, X^{'}_1=1) =\frac{1}{2}- \tilde{p}_{11}. \label{eqn:p10.scc}\\ 
&& \tilde{p}_{01}=P(X^{'}_2=1, X^{'}_1=0) = P(X^{'}_2=1) - P(X^{'}_2=1, X^{'}_1 =1)=\frac{1}{2}-\tilde{p}_{11}. \label{eqn:p01.scc} \\
&&\tilde{p}_{00}= P(X^{'}_2 =0, X^{'}_1 =0)=P(X^{'}_2=0)- P(X^{'}_2=0, X^{'}_1=1) =\frac{1}{2}- \tilde{p}_{10} =\tilde{p}_{11}. \label{eqn:p00.scc} 
\end{eqnarray}
Given the new data $\bX^{'}$, the posterior ratio between $P(\cG_1 \mid \bX^{'})$ and $P(\cG_0 \mid \bX^{'})$ can be derived by combining Lemma~\ref{lemma:BDe.posterior.ratio} with Equations~\eqref{eqn:p11.scc}, \eqref{eqn:marginals}, \eqref{eqn:p10.scc}, \eqref{eqn:p01.scc} and \eqref{eqn:p00.scc}. 
\begin{eqnarray*}
&&\lim_{N \rightarrow \infty}\tilde{r}_{10} \\
&=&  \tilde{p}_{11} \log\left(\frac{\tilde{p}_{11}}{\tilde{p}_{1.} \tilde{p}_{.1}}\right) + \tilde{p}_{10} \log\left(\frac{\tilde{p}_{10}}{  \tilde{p}_{1.} \tilde{p}_{.0} }\right) + \tilde{p}_{01} \log\left(\frac{ \tilde{p}_{01}}{ \tilde{p}_{0.} \tilde{p}_{.1}}\right) + \tilde{p}_{00} \log\left(\frac{ \tilde{p}_{00}}{\tilde{p}_{0.} \tilde{p}_{.0}}\right)\\
&=& \log(4) + \tilde{p}_{11} \log(\tilde{p}_{11}) + \tilde{p}_{10} \log(\tilde{p}_{10}) + \tilde{p}_{01} \log(\tilde{p}_{01}) +\tilde{p}_{00} \log(\tilde{p}_{00}) \\
&=& \log(4) + 2 \tilde{p}_{11} \log(\tilde{p}_{11}) + 2\tilde{p}_{10} \log(\tilde{p}_{10}) \\
&=& \log(4) + 2 \tilde{p}_{11} \log(\tilde{p}_{11}) + (1-2\tilde{p}_{11}) \log\left(\frac{1}{2}-\tilde{p}_{11}\right).
\end{eqnarray*}
The proof completes.
\end{proof}

\section{Proof of Theorem~\ref{thm:ratio.scd}}
\begin{proof} \label{proof:ratio.scd}
Since the data $\bX$ is generated from the scenario $\mathcal{S}_{cd}$, the first moment of $X_2$ can be computed as,
\begin{eqnarray*}
\bbE(X_2) &=& \bbE(\bbE(X_2\mid X_1)) = \int_{-\infty}^{+\infty} \bbE(X_2\mid X_1) P(X_1 =x_1) dx_1 \\
&=& \int_{-\infty}^{+\infty} \frac{\exp(\beta(x_1 - \mu_1))}{1+ \exp(\beta(x_1 - \mu_1))} \frac{1}{\sqrt{2\pi} \sigma_1} \exp\left\{ -\frac{(x_1 - \mu_1)^2}{2\sigma^2_1}\right\}dx_1\\
&=& \int_{-\infty}^{+\infty} \frac{1}{\sqrt{2\pi}} \frac{\exp(\beta \sigma_1 t)}{1+ \exp(\beta \sigma_1 t)}  \exp\left\{ -\frac{t^2}{2}\right\}dt.
\end{eqnarray*}
Therefore, the elements of $\Sigma$ are given as
\begin{eqnarray*}
\Sigma_{11} &=& \sigma_1^2\\
\Sigma_{12} &=& \int_{-\infty}^{+\infty} \frac{t \sigma_1}{\sqrt{2\pi}} \frac{\exp(\beta \sigma_1 t)}{1+ \exp(\beta \sigma_1 t)}  \exp\left\{ -\frac{t^2}{2}\right\}dt\\
\Sigma_{22} &=& \bbE(X_2) - (\bbE(X_2))^2,
\end{eqnarray*}
where $\bbE(X_2) = \int_{-\infty}^{+\infty} \frac{\exp(\beta \sigma_1 t)}{1+ \exp(\beta \sigma_1 t)} \frac{1}{\sqrt{2\pi}} \exp\left\{ -\frac{t^2}{2}\right\}dt$. According to the Lemma \ref{lemma:BGe.posterior.ratio}, it is straightforward to derive 
\begin{eqnarray*}
 \lim_{N \rightarrow \infty} r_{10}&=& \frac{1}{2} \log \left( \frac{\Sigma_{11} \Sigma_{22}}{ \Sigma_{11} \Sigma_{22} - \Sigma_{12}^2} \right).
\end{eqnarray*}
For the discretized data $\bX'$, we need to define $\tilde{p}_{ij} \triangleq P(X^{'}_1 = i, X^{'}_2 =j), i \in \{0,1\},j \in \{0,1\}$.
Under the case $\mathcal{S}_{cd}$, we have
\begin{eqnarray}
\tilde{p}_{11} &=& P(X^{'}_1 =1,X^{'}_2=1)  =P(X_1 \geq \mu_1, X_2 =1 ) \nonumber\\
&=&\int_{\mu_1}^\infty \frac{1}{\sqrt{2\pi }\sigma_1  } \exp\left\{- \frac{(x_1-\mu_1)^2}{2 \sigma_1^2} \right\} \frac{\exp\{\beta\times(x_1 - \mu_1)\}}{1 + \exp\{\beta \times(x_1 - \mu_1)\}} dx_1 \nonumber\\
&=& \int_{0}^{\infty} \frac{1}{\sqrt{2\pi } } \exp\left\{- \frac{x^2}{2 } \right\} \frac{\exp\{\beta x \sigma_1 \}}{1+\exp\{\beta x \sigma_1 \}}dx. \label{eqn:p11.scd}
\end{eqnarray}
Similarly, we have 
 \begin{eqnarray} \label{eqn:p00.scd} 
 \tilde{p}_{00} = \tilde{p}_{11}. 
 \end{eqnarray}
Then $\tilde{p}_{10}$ and $\tilde{p}_{01}$ can be derived from
\begin{eqnarray}
&&\tilde{p}_{10} = P( X^{'}_1=1,X^{'}_2=0) =P(X^{'}_1=1)-P(X^{'}_1=1, X^{'}_2=1) =\frac{1}{2}- \tilde{p}_{11}. \label{eqn:p10.scd}\\ 
&& \tilde{p}_{01}=P(X^{'}_1=0,X^{'}_2=1) = P(X^{'}_1=0) - P(X^{'}_1=0, X^{'}_2=0)=\frac{1}{2}-\tilde{p}_{11}. \label{eqn:p01.scd} 
\end{eqnarray}
Given the new data $\bX^{'}$, the posterior ratio between $P(\cG_1 \mid \bX^{'})$ and $P(\cG_0 \mid \bX^{'})$ can be derived by combining Lemma~\ref{lemma:BDe.posterior.ratio} with Equations~\eqref{eqn:p11.scd}, ~\eqref{eqn:p00.scd}, ~\eqref{eqn:p10.scd} and~\eqref{eqn:p01.scd}.
\begin{eqnarray*}
&&\lim_{N \rightarrow \infty} \tilde{r}_{10} \\
&=&  \tilde{p}_{11} \log\left(\frac{\tilde{p}_{11}}{\tilde{p}_{1.} \tilde{p}_{.1}}\right) + \tilde{p}_{10} \log\left(\frac{\tilde{p}_{10}}{  \tilde{p}_{1.} \tilde{p}_{.0} }\right) + \tilde{p}_{01} \log\left(\frac{ \tilde{p}_{01}}{ \tilde{p}_{0.} \tilde{p}_{.1}}\right) + \tilde{p}_{00} \log\left(\frac{ \tilde{p}_{00}}{\tilde{p}_{0.} \tilde{p}_{.0}}\right)\\
&=& \log(4) + \tilde{p}_{11} \log(\tilde{p}_{11}) + \tilde{p}_{10} \log(\tilde{p}_{10}) + \tilde{p}_{01} \log(\tilde{p}_{01}) +\tilde{p}_{00} \log(\tilde{p}_{00}) \\
&=& \log(4) + 2 \tilde{p}_{11} \log(\tilde{p}_{11}) + (1-2\tilde{p}_{11}) \log\left(\frac{1}{2}-\tilde{p}_{11}\right).
\end{eqnarray*}
where $p_{11}$ can be computed as in Equation \eqref{eqn:p11.scd}.
The proof completes.
\end{proof}

\section{Proof of Theorem~\ref{thm:ratio.sdc}}
\begin{proof} \label{proof:ratio.sdc}
Since the data $\bX$ is generated from the case $\mathcal{S}_{dc}$ with $p=\frac{1}{2}$, then $\bbE(X_1) = p=\frac{1}{2}$ and $\Sigma_{11} = p(1-p)=\frac{1}{4}$. To compute $\bbE(X_2)$, we have $\bbE(X_2)= \bbE(\bbE(X_2\mid X_1)) = \bbE(\mu_2 +\beta X_1) = \mu_2+\frac{\beta}{2}$ and  $\bbE(X_2^2)= \bbE(\bbE(X_2^2\mid X_1)) = \sigma_2^2 + \mu_2^2 + \beta \mu_2+\frac{\beta}{2}$. To compute the covariance matrix $\Sigma$, we have
\begin{eqnarray*}
\Sigma_{12}&=& \bbE(X_1 X_2)- \bbE(X_1)\bbE(X_2) =\frac{\beta}{4},\\ 
\Sigma_{22}&=& \beta^2 \sigma_1^2 +\sigma^2_2 = \frac{1}{4} \beta^2 +\sigma_2^2,
\end{eqnarray*}
where $\bbE(X_1 X_2) = \sum_{a\in \{0,1\}} P(X_1=a)\bbE(a X_2 \mid X_1=a) = \frac{\mu_2}{2} + \frac{\beta}{2}$. Therefore, $\Sigma = \left[\begin{array}{cc}
    1/4 & \beta/4  \\
    \beta/4   & \beta^2/4 +\sigma_2^2
\end{array} \right]$. According to the Lemma \ref{lemma:BGe.posterior.ratio}, it is straightforward to derive 
\begin{eqnarray}
\lim_{N\rightarrow \infty}r_{10} = \frac{1}{2} \log \left( \frac{\Sigma_{11} \Sigma_{22}}{\Sigma_{11} \Sigma_{22} - \Sigma_{12}^2} \right) = \frac{1}{2} \log \left( 1+ \frac{\beta^2}{4 \sigma_2^2}\right). \notag 
\end{eqnarray}
For the new data $\bX'$, define $\tilde{p}_{ij} \triangleq P(X^{'}_1 = i, X^{'}_2 =j), i \in \{0,1\},j \in \{0,1\}$.
Under the case $\mathcal{S}_{cd}$, we have
\begin{eqnarray}
&& \tilde{p}_{11} =P(X^{'}_1=1,X^{'}_2=1)  \nonumber\\
&&=P(X_1=1) P( X^{'}_2 =1 \mid X_1 =1) \nonumber\\
&&=\frac{1}{2} \int_{\mu_2 + \frac{\beta}{2}}^\infty \frac{1}{\sqrt{2\pi }\sigma_2  } \exp\left\{- \frac{(x_2-\mu_2-\beta)^2}{2 \sigma_2^2} \right\} dx_2 \nonumber\\
&&=\frac{1}{2}  \int_{-\frac{\beta}{2 \sigma_2}}^{\infty} \frac{1}{\sqrt{2\pi } } \exp\left\{- \frac{x^2}{2 } \right\}dx. \label{eqn:p11.sdc}
\end{eqnarray}
Similarly, we have 
\begin{eqnarray}
 \tilde{p}_{00} = \frac{1}{2} \int_{-\infty}^{\frac{\beta}{2 \sigma_2}} \frac{1}{\sqrt{2\pi } } \exp\left\{- \frac{x^2}{2 } \right\}dx = \tilde{p}_{11}. \label{eqn:p00.sdc}
\end{eqnarray}
Then $\tilde{p}_{10}$ and $\tilde{p}_{01}$ can be derived from
\begin{eqnarray}
&&\tilde{p}_{10} = P( X^{'}_1 =1,X^{'}_2 =0) =P(X^{'}_1 =1)-P(X^{'}_1 =1, X^{'}_2=1) =\frac{1}{2}- \tilde{p}_{11}. \label{eqn:p10.sdc}\\ 
&& \tilde{p}_{01}=P(X^{'}_1 =0,X^{'}_2 =1) = P(X^{'}_1 =0) - P(X^{'}_1 =0, X^{'}_2=0)=\frac{1}{2}-\tilde{p}_{11}. \label{eqn:p01.sdc} 
\end{eqnarray}

Given the new data $\bX^{'}$, the posterior ratio between $P(\cG_1 \mid \bX^{'})$ and $P(\cG_0 \mid \bX^{'})$ can be derived by combining Lemma~\ref{lemma:BDe.posterior.ratio} with Equations \eqref{eqn:p11.sdc}, \eqref{eqn:p00.sdc}, \eqref{eqn:p10.sdc} and \eqref{eqn:p01.sdc}. 
\begin{eqnarray*}
&&\lim_{N \rightarrow \infty} \tilde{r}_{10} \\
&=&  \tilde{p}_{11} \log\left(\frac{\tilde{p}_{11}}{\tilde{p}_{1.} \tilde{p}_{.1}}\right) + \tilde{p}_{10} \log\left(\frac{\tilde{p}_{10}}{  \tilde{p}_{1.} \tilde{p}_{.0} }\right) + \tilde{p}_{01} \log\left(\frac{ \tilde{p}_{01}}{ \tilde{p}_{0.} \tilde{p}_{.1}}\right) + \tilde{p}_{00} \log\left(\frac{ \tilde{p}_{00}}{\tilde{p}_{0.} \tilde{p}_{.0}}\right)\\
&=& \log(4) + \tilde{p}_{11} \log(\tilde{p}_{11}) + \tilde{p}_{10} \log(\tilde{p}_{10}) + \tilde{p}_{01} \log(\tilde{p}_{01}) +\tilde{p}_{00} \log(\tilde{p}_{00}) \\
&=& \log(4) + 2 \tilde{p}_{11} \log(\tilde{p}_{11}) + (1-2\tilde{p}_{11}) \log\left(\frac{1}{2}-\tilde{p}_{11}\right),
\end{eqnarray*}
where $\tilde{p}_{11} = \frac{1}{2}  \int_{-\frac{\beta}{2 \sigma_2}}^{\infty} \frac{1}{\sqrt{2\pi } } \exp\left\{- \frac{x^2}{2 } \right\}dx$.
The proof completes.
\end{proof}

\section{Proof of Theorem~\ref{thm:ratio.sdd}}

\begin{proof} \label{proof:ratio.sdd}

According to the data generation process of $\cS_{dd}$, we have 
\begin{eqnarray*}
\bbE(X_2) &=& p \beta + \frac{1-\beta}{2}, \\
\bbE(X_2^2) &= & \bbE(X_2), \\
\bbV(X_2) &=& \frac{1}{4} - (p - 0.5)^2 \beta^2,\\
\Sigma_{12} &=& \bbE(X_1 X_2)- \bbE(X_1) \bbE(X_2) = \beta p(1-p).
\end{eqnarray*}

Thus, the covariance matrix is given as $$\Sigma = \left[\begin{array}{cc}
    p (1-p) & \beta p (1-p)  \\
    \beta p (1-p)  & \frac{1}{4} - (p - 0.5)^2 \beta^2 
\end{array} \right].$$

Then $$\log\left( \frac{\Sigma_{11} \Sigma_{22}}{\Sigma_{11} \Sigma_{22} - \Sigma_{12}^2}\right) = \log\left( \frac{1- (2p-1)^2 \beta^2}{1- \beta^2} \right).$$


Under the $S_{dd}$ scenario, we have
\begin{eqnarray} \label{eqn:sdd.probability}
p_{11} = p (0.5+\frac{\beta}{2}); p_{10} = p (0.5-\frac{\beta}{2}); p_{01} = (1-p) (0.5-\frac{\beta}{2}); p_{00} = (1-p) (0.5+\frac{\beta}{2}).
\end{eqnarray}
Combining Equation \eqref{eqn:sdd.probability} with the conclusion in Lemma \ref{lemma:BDe.posterior.ratio}, we have
\begin{eqnarray*}
\lim_{N \rightarrow \infty}\tilde{r}_{10} & = & p_{11} \log\left(\frac{p_{11}}{p_{1.} p_{.1}}\right) + p_{10} \log\left(\frac{p_{10}}{  p_{1.} p_{.0} }\right) + p_{01} \log\left(\frac{ p_{01}}{ p_{0.} p_{.1}}\right) + p_{00} \log\left(\frac{ p_{00}}{p_{0.} p_{.0}}\right).
\end{eqnarray*}


The proof completes.
\end{proof}



\end{document}